\documentclass[12pt]{article}
\usepackage{amsmath,amssymb,amsthm,booktabs}
\usepackage{mathrsfs}
\usepackage{geometry}
\usepackage[T1]{fontenc}
\usepackage[utf8]{inputenc}
\usepackage[colorlinks,citecolor=blue,urlcolor=blue]{hyperref}
\usepackage{newtxtext,newtxmath}
\usepackage{indentfirst}
\usepackage{bm}
\usepackage{graphicx}
\usepackage[round]{natbib}
\numberwithin{equation}{section}
\theoremstyle{plain}
\newtheorem{theorem}{Theorem}[section]
\newtheorem{lemma}{Lemma}[section]
\newtheorem{corollary}{Corollary}[section]

\theoremstyle{definition}
\newtheorem{definition}{Definition}[section]

\newtheorem{example}{Example}[section]
\newtheorem{remark}{Remark}[section]
\usepackage{setspace}
\begin{document}

\title{Discrete asset pricing under transaction costs and model uncertainty with and without short-sale constraints}
\author{Wenqing Zhang\thanks{College of Mathematics and Physics, China Three Gorges University, Yichang, Hubei, China. Email: wqzhang1025@163.com.}}
\date{}
\maketitle

\begin{abstract}
	We study discrete-time asset pricing with bid-ask spreads and model uncertainty. The family of probability measures enters the no-arbitrage condition through the union of its supports. In the single-period setting, we establish fundamental theorems of asset pricing with and without short-sale constraints. In the unconstrained market, no arbitrage is equivalent to the existence of a full-support martingale consistent price system. Under short-sale constraints, the martingale condition is replaced by a supermartingale condition. We then extend these results to a finite multi-period tree. The initial information is allowed to be nontrivial, so the initial trading cost and valuation bounds may depend on the initial state, and the corresponding inequalities are formulated conditionally.
	Finally, for a family of pricing measures, we introduce lower and upper robust supermartingale consistent price systems. We show that no arbitrage implies the existence of a lower system, while the existence of an upper system is sufficient for no arbitrage. A two-state example shows that the lower condition alone is not sufficient.
\end{abstract}


\noindent\textbf{Keywords:} Asset pricing; Transaction costs; Model uncertainty; Short-sale constraints; Finite-state market

%

\section{Introduction}
\label{sec:introduction}

The classical theory of asset pricing is built on the principle that an arbitrage-free market admits a suitable pricing rule. In a frictionless and unconstrained market, this rule is typically represented by an equivalent martingale measure. Bid--ask spreads, short-sale constraints, and uncertainty about the probability probability law each alter this dual description.

With transaction costs, a risky asset is purchased at the ask price and sold at the bid price. A standard dual object is therefore a shadow price process inside the bid-ask spread. If the shadow price is a martingale under an equivalent probability measure, the pair is a consistent price system. This approach goes back to \citet{Jouini1995}; finite-state and discrete-time formulations are developed, among others, by \citet{Kabanov01}, \citet{Schachermayer04}, \citet{Follmer16}, and \citet{Bielecki15}. Trading at a shadow price is never less favorable to the trader than execution at the bid or ask, while the martingale property rules out a systematic gain under the pricing measure.

Model uncertainty replaces a single probability measure by a family of possible measures. Nondominated formulations require genuinely quasi-sure arguments; see \citet{DM06}, \citet{Bouchard15}, \citet{BouchardNutz16}, and \citet{Bayraktar16}. Model-free and pointwise approaches are studied by \citet{Acciaio16}, \citet{Burzoni16}, and \citet{Burzoni19}, and \citet{Obloj21} relate the pathwise and quasi-sure viewpoints. Nonlinear expectation theory provides another language for families of measures; see \citet{Peng2004} and \citet{Peng2019}.

The present finite-state framework is more specific. We assume that every state or path is assigned positive probability by at least one probability model. This condition makes the union of the probability supports equal to the prescribed state space. Since the state space is finite, a finite convex combination of physical measures yields a probability measure with full support.
Thus the family of probability measures affects the no-arbitrage definition through its support, but the model is not nondominated in the technical sense. Making this reduction explicit prevents the probability family from being confused with the auxiliary family of pricing measures introduced later.

Short-sale constraints form another important market friction. They may arise from regulation, borrowing costs, margin requirements, or a lack of lendable securities. Empirical evidence is discussed by \citet{Battalio11}. The dual effect is well known: restricting risky holdings to be nonnegative replaces a martingale pricing rule by a supermartingale rule; see \citet{JouiniShort95}, \citet{Pulido14}, and \citet{Coculescu19}. Related implications for derivative and futures markets are considered by \citet{Jarrow15} and \citet{He20}.

This paper extends the frictionless finite-state analysis of \citet{YangZhangAsset} to markets with bid-ask spreads. The single-period argument uses finite-dimensional separation to construct a full-support consistent price system. In the multi-period model, the same conclusion does not follow by independently choosing a one-period pricing measure at every date. We instead formulate the market through nodewise solvency cones and invoke the finite-state fundamental theorem of asset pricing on each initial subtree. This also clarifies the normalization required when $\mathcal F_0$ is nontrivial.

Under short-sale constraints, the dual risky components become supermartingales. We then compare lower and upper conditions relative to a family $\mathcal Q$ of pricing measures. The lower condition follows from the ordinary constrained theorem by taking a singleton family; it is therefore an existence corollary, not a new equivalent characterization. The upper condition is stronger and is sufficient for no arbitrage. An explicit two-state example shows that a lower system satisfying the support condition can coexist with an arbitrage.

The remainder of the paper is organized as follows. Section \ref{sec:single} treats the single-period market and gives the counterexample separating the lower and upper conditions. Section \ref{sec:multi} develops the multi-period theory, including the nodewise cone construction, conditional dual bounds, and the constrained supermartingale formulation. Section \ref{sec:conclusion} summarizes the scope and limitations of the results.

\section{Single-period asset pricing}
\label{sec:single}

\subsection{Basic financial market model}

We begin with a one-period market in order to isolate the effects of bid-ask spreads and model uncertainty before introducing dynamic trading. At time $0$, the investor chooses a portfolio and holds it until time $1$, when all risky positions are liquidated. The riskless asset is used as the numeraire, so all quantities below are discounted. This formulation makes the distinction between the cost of establishing a position and its terminal liquidation value explicit.

Let
\[
\Omega=\{\omega_1,\ldots,\omega_K\}
\]
be a finite state space. The market contains one riskless asset and $M$ risky assets. The riskless asset is perfectly liquid and is used as the numeraire, so
\[
\underline S_0(t)=\overline S_0(t)=1,
\qquad t=0,1.
\]
For each risky asset $m=1,\ldots,M$, let $\underline S_m(t)$ and $\overline S_m(t)$ denote the discounted bid and ask prices, respectively. We assume
\begin{equation}
\label{eq:single-spread}
0\le \underline S_m(t)\le \overline S_m(t),
\qquad t=0,1,
\quad m=1,\ldots,M.
\end{equation}
The bid-ask formulation includes the proportional transaction-cost model as a special case: if $S_m(t)$ is a reference frictionless price and $\lambda_m(t)\in[0,1]$, then
\[
\underline S_m(t)=(1-\lambda_m(t))S_m(t),
\qquad
\overline S_m(t)=(1+\lambda_m(t))S_m(t).
\]

The initial prices are deterministic, while the terminal prices are random variables on $\Omega$. Model uncertainty is described by a nonempty family $\mathcal P$ of probability measures on $\Omega$. We assume
\begin{equation}
\label{eq:single-P-support}
\sup_{P\in\mathcal P}P(\{\omega\})>0,
\qquad \forall\omega\in\Omega.
\end{equation}
Thus every state is possible under at least one probability model.

\begin{lemma}[Support reduction]
\label{lem:single-support-reduction}
Under \eqref{eq:single-P-support}, there exists a probability measure
$R_{\mathcal P}\in\operatorname{conv}(\mathcal P)$ with full support such that, for every
nonnegative random variable $X$,
\[
\sup_{P\in\mathcal P}E_P[X]>0
\quad\Longleftrightarrow\quad
E_{R_{\mathcal P}}[X]>0
\quad\Longleftrightarrow\quad
X(\omega)>0\ \text{for some }\omega\in\Omega.
\]
Consequently, the no-arbitrage condition in this finite-state model depends
on $\mathcal P$ only through the union of its supports.
\end{lemma}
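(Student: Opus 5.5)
The plan is to build $R_{\mathcal P}$ explicitly as a finite average and then prove the three equivalences by cycling through them. For each state $\omega_k$, $k=1,\ldots,K$, condition \eqref{eq:single-P-support} lets us pick some $P_k\in\mathcal P$ with $P_k(\{\omega_k\})>0$. Set
\[
R_{\mathcal P}=\frac1K\sum_{k=1}^K P_k .
\]
This is a convex combination of elements of $\mathcal P$, so $R_{\mathcal P}\in\operatorname{conv}(\mathcal P)$. It is a probability measure on $\Omega$. It has full support, since $R_{\mathcal P}(\{\omega_k\})\ge P_k(\{\omega_k\})/K>0$ for every $k$. Here the finiteness of $\Omega$ is essential: only finitely many witnesses are needed, and the average is well defined without any measurability or countability issues.

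Now fix a nonnegative random variable $X$. We prove the three implications in a cycle.

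\begin{enumerate}
\item \emph{From $\sup_{P}E_P[X]>0$ to existence of $\omega$ with $X(\omega)>0$.} If $X\equiv 0$, then $E_P[X]=0$ for every $P$, so the supremum is $0$. Taking the contrapositive gives the implication.
\item \emph{From $X(\omega_k)>0$ to $E_{R_{\mathcal P}}[X]>0$.} Since $X\ge0$ and $R_{\mathcal P}$ has full support,
\[
E_{R_{\mathcal P}}[X]\ge X(\omega_k)\,R_{\mathcal P}(\{\omega_k\})>0 .
\]
\item \emph{From $E_{R_{\mathcal P}}[X]>0$ to $\sup_P E_P[X]>0$.} Expectation is linear in the measure, so
\[
E_{R_{\mathcal P}}[X]=\frac1K\sum_k E_{P_k}[X].
\]
Some term in this sum must be positive, hence $\sup_P E_P[X]\ge \max_k E_{P_k}[X]>0$. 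Alternatively, one can skip linearity and go directly from a state with $X(\omega_k)>0$ to $E_{P_k}[X]\ge X(\omega_k)P_k(\{\omega_k\})>0$.
\end{enumerate}

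For the ``consequently'' clause, we note that in this setting an arbitrage is defined by a terminal liquidation value $X\ge0$ holding $\mathcal P$-quasi-surely (equivalently, on $\bigcup_P\operatorname{supp}P=\Omega$) with $\sup_P E_P[X]>0$. Both requirements are expressed only through which states carry positive mass under some $P$. By the equivalences above, these conditions are the same as ``$X\ge0$ on $\Omega$ and $X(\omega)>0$ for some $\omega$.'' In particular, any two families with the same union of supports define the same arbitrage notion, and this notion also coincides with the one defined by the single measure $R_{\mathcal P}$.

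There is no real obstacle in this argument. The only point needing care is the construction of $R_{\mathcal P}$: a supremum over a possibly infinite family must be reduced to finitely many witnesses, and it is finiteness of $\Omega$ that makes this possible.
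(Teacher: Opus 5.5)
Your proposal is correct and follows the paper's proof: you build the same average $R_{\mathcal P}=\frac1K\sum_{k=1}^K P_k$ of witnesses with $P_k(\{\omega_k\})>0$, and your cycle of three implications spells out the paper's one-line remark that, for $X\ge0$, each condition is equivalent to $X$ being strictly positive in some state. Your treatment of the consequence clause is also fine, since under \eqref{eq:single-P-support} the quasi-sure nonnegativity you describe coincides with the paper's requirement $V(1;\omega)\ge0$ for all $\omega\in\Omega$.
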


\begin{proof}
For every $\omega_k$, choose $P_k\in\mathcal P$ with
$P_k(\{\omega_k\})>0$ and set
\[
R_{\mathcal P}=\frac1K\sum_{k=1}^K P_k.
\]
Then $R_{\mathcal P}$ has full support.  For $X\ge0$, each of the three
conditions is equivalent to strict positivity of $X$ in at least one state.
\end{proof}

A trading strategy is represented by its net holdings
\[
h=(h_0,h_1,\ldots,h_M)\in\mathbb R^{M+1}.
\]
Because simultaneously opening offsetting long and short positions only
pays the spread without changing the net payoff, it is without loss to use
the minimal cost of establishing $h$.  The initial discounted cost is
\begin{equation}
\label{eq:single-C0}
C(0)
=
h_0+
\sum_{m=1}^M
\left[
h_m^+\overline S_m(0)
-
h_m^-\underline S_m(0)
\right].
\end{equation}
The terminal discounted liquidation value is
\begin{equation}
\label{eq:single-V1}
V(1)
=
h_0+
\sum_{m=1}^M
\left[
h_m^+\underline S_m(1)
-
h_m^-\overline S_m(1)
\right],
\end{equation}
where $h_m^+=\max\{h_m,0\}$ and $h_m^-=\max\{-h_m,0\}$. The asymmetry between \eqref{eq:single-C0} and \eqref{eq:single-V1} is caused by transaction costs: a long position is opened at the ask price and liquidated at the bid price, whereas a short position generates the bid price initially but must be covered at the ask price at maturity. This asymmetry is the reason that a single frictionless asset price is replaced below by a shadow price selected from the bid-ask interval.

\subsection{Asset pricing under transaction costs and model uncertainty}

We now relate the absence of arbitrage to the existence of a linear pricing rule. The argument proceeds in three steps. We first define arbitrage in a way that reflects the family of possible probability models. We then introduce a consistent price system, which consists of a full-support pricing measure and a shadow price inside the spread. Finally, we prove that these two objects are equivalent in the finite-state market and use the resulting family of price systems to obtain valuation bounds.

\begin{definition}[Arbitrage under model uncertainty]
\label{def:single-arbitrage}
A trading strategy $h$ is called an arbitrage opportunity under model uncertainty if

\emph{(i)} $C(0)\le0$;

\emph{(ii)} $V(1;\omega)\ge0$ for all $\omega\in\Omega$, and
\[
\sup_{P\in\mathcal P}E_P[V(1)]>0.
\]
The market is said to satisfy $NA(\mathcal P)$ if no such strategy exists.
\end{definition}

\begin{remark}
Under \eqref{eq:single-P-support}, condition \emph{(ii)} is equivalent to
\[
V(1;\omega)\ge0\quad\forall\omega\in\Omega,
\qquad
V(1;\omega_0)>0\quad\text{for some }\omega_0\in\Omega.
\]
Indeed, strict positivity at one state is detected by at least one measure in $\mathcal P$.
\end{remark}

The preceding definition uses the family $\mathcal P$ only to determine whether a nonnegative terminal payoff is genuinely positive in at least one economically possible state. The pricing measure introduced next has a different role. It is not selected as the true probability model; instead, it supports a linear valuation under which a shadow price inside the bid-ask spread has the martingale property.

\begin{definition}[Consistent price system]
\label{def:single-CPS}
A pair $(Q,S^*)$ is called a consistent price system if $Q$ is a probability measure satisfying
\[
Q(\{\omega\})>0,
\qquad \forall\omega\in\Omega,
\]
and, for each risky asset $m=1,\ldots,M$,
\[
S_m^*(t)\in[\underline S_m(t),\overline S_m(t)],
\qquad t=0,1,
\]
and
\[
S_m^*(0)=E_Q[S_m^*(1)].
\]
\end{definition}

Denote by
\begin{equation}
\label{eq:single-M}
\mathcal M
:=
\{(Q,S^*):(Q,S^*)\text{ is a consistent price system}\}
\end{equation}
the set of all single-period consistent price systems. Introducing the set $\mathcal M$ is useful because the pricing system need not be unique. Transaction costs generate an interval of admissible shadow prices, and model uncertainty does not single out a unique pricing measure. Thus the natural dual object is a family of admissible pairs rather than one distinguished risk-neutral measure.

We can now state the single-period fundamental theorem. The ``if'' direction follows from comparing the actual trading cost with the value of the same portfolio at the shadow price. The converse direction uses separation of the attainable cone from the positive orthant and then reconstructs both the probability weights and the shadow prices from the separating functional.

\begin{theorem}[Single-period fundamental theorem of asset pricing]
\label{thm:single-FTAP}
The market satisfies $NA(\mathcal P)$ if and only if there exists a consistent price system. Equivalently,
\[
NA(\mathcal P)
\quad\Longleftrightarrow\quad
\mathcal M\ne\varnothing.
\]
\end{theorem}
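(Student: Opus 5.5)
The proof splits into the two implications; by Lemma~\ref{lem:single-support-reduction} and the subsequent remark, throughout we may replace condition (ii) of Definition~\ref{def:single-arbitrage} by ``$V(1)\ge0$ everywhere and $V(1;\omega_0)>0$ for some $\omega_0$''.

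For sufficiency, fix $(Q,S^*)\in\mathcal M$ and any strategy $h$. Since $h_m^+\overline S_m(0)-h_m^-\underline S_m(0)\ge h_mS_m^*(0)$ and $h_m^+\underline S_m(1)-h_m^-\overline S_m(1)\le h_mS_m^*(1)$ pointwise, we get
\[
C(0)\ge h_0+\sum_m h_mS_m^*(0)=E_Q\Big[h_0+\sum_m h_mS_m^*(1)\Big]\ge E_Q[V(1)].
\]
If $h$ were an arbitrage, then $V(1)\ge0$ with strict positivity at some state, and since $Q$ charges every state, $E_Q[V(1)]>0$, contradicting $C(0)\le0$.

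For necessity, I would linearize the problem by splitting each risky position into separate long and short parts, $a_m,b_m\ge0$. Consider the map
\[
(h_0,a,b)\mapsto \Big(-h_0-\sum_m(a_m\overline S_m(0)-b_m\underline S_m(0)),\; \big(h_0+\sum_m(a_m\underline S_m(1;\omega_k)-b_m\overline S_m(1;\omega_k))\big)_{k}\Big)\in\mathbb R^{1+K}.
\]
Its image $G$ is a polyhedral convex cone: it is the linear image of $\mathbb R\times\mathbb R_+^{2M}$, hence finitely generated and closed. Using both $a_m,b_m>0$ only worsens the payoff relative to the net position $h_m=a_m-b_m$, so $NA(\mathcal P)$ implies $G\cap\mathbb R_+^{1+K}=\{0\}$. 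Indeed, a nonzero nonnegative vector would either give an arbitrage directly, or have zero terminal payoff with strictly negative cost; in the latter case one adds the surplus to $h_0$ to obtain a strictly positive terminal payoff at zero cost.

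Next I would separate the polyhedral cone $G$ from the compact simplex $\Delta=\{x\ge0,\sum x_i=1\}$. Since $G$ is closed and the two sets are disjoint, there is a vector $y$ with $y\cdot g\le0$ for all $g\in G$ and $y\cdot x>0$ on $\Delta$. The latter gives $y_i>0$ for every coordinate. Normalize $Q(\{\omega_k\})=y_k/(\sum_j y_j)$, which has full support. Testing the inequality against the generators then produces the shadow prices. Taking $h_0=\pm1$ forces $y_0=\sum_k y_k$. Taking $a_m=1$ gives $E_Q[\underline S_m(1)]\le\overline S_m(0)$, and taking $b_m=1$ gives $E_Q[\overline S_m(1)]\ge\underline S_m(0)$.

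The main obstacle is the final step: choosing $S_m^*(0)$ and $S_m^*(1)$ inside the spreads with $S_m^*(0)=E_Q[S_m^*(1)]$. The intervals $[\underline S_m(0),\overline S_m(0)]$ and $[E_Q\underline S_m(1),E_Q\overline S_m(1)]$ intersect precisely by the two inequalities just obtained. I would pick $c$ in this intersection and write $c=E_Q[\theta\underline S_m(1)+(1-\theta)\overline S_m(1)]$ for some $\theta\in[0,1]$, which is possible since $\theta\mapsto E_Q[\cdot]$ is continuous and monotone. Setting $S_m^*(1)=\theta\underline S_m(1)+(1-\theta)\overline S_m(1)$ and $S_m^*(0)=c$ then gives a consistent price system.
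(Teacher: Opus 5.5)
Your proposal is correct and follows essentially the same route as the paper: the shadow-value comparison for sufficiency, and for necessity a separation of the polyhedral attainable cone from the nonnegative orthant, normalization via $h_0=\pm1$, the one-unit long and short test positions, and the interval-intersection and convex-interpolation choice of $S^*$. The only differences are cosmetic. You parametrize by split long/short holdings, which makes closedness immediate, and you separate from the compact simplex so that strict positivity of the functional comes out explicitly; the paper instead works with the free-disposal cone and appeals to strong separation directly.
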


\begin{proof}
Assume first that $(Q,S^*)\in\mathcal M$. Since $S_m^*(0)$ lies inside the initial spread,
\[
C(0)
\ge
h_0+\sum_{m=1}^M h_mS_m^*(0).
\]
Using the martingale equality and the terminal spread inequalities gives
\begin{align*}
C(0)
&\ge
E_Q\left[h_0+\sum_{m=1}^M h_mS_m^*(1)\right]\\
&\ge
E_Q\left[h_0+\sum_{m=1}^M
\bigl(h_m^+\underline S_m(1)-h_m^-\overline S_m(1)\bigr)
\right]
=E_Q[V(1)].
\end{align*}
If an arbitrage opportunity existed, then $V(1)$ would be nonnegative in every state and strictly positive in at least one state. Since $Q$ has full support, $E_Q[V(1)]>0$, contradicting $C(0)\le0$.

Conversely, assume $NA(\mathcal P)$. Let
\[
\mathcal G
=
\left\{
(-C(0),V(1;\omega_1),\ldots,V(1;\omega_K)):
h\in\mathbb R^{M+1}
\right\}
\]
and define the free-disposal cone
\[
\mathfrak C=\mathcal G-\mathbb R_+^{K+1}.
\]
For any two net portfolios $h$ and $g$, the bid-ask inequalities give
\[
C(h+g)\le C(h)+C(g),
\qquad
V(h+g;\omega)\ge V(h;\omega)+V(g;\omega).
\]
Thus netting opposite positions weakly improves every coordinate of
$(-C,V)$.  Equivalently, $\mathfrak C$ is the cone generated by the
elementary long and short positions, the two cash directions, and the
free-disposal directions.  It is therefore a closed polyhedral convex
cone.  The no-arbitrage condition implies
\[
\mathfrak C\cap\mathbb R_+^{K+1}=\{0\}.
\]
Indeed, if only the initial-cost coordinate of a nonzero intersection
point were positive, the initial receipt could be retained in the numeraire
to produce a strictly positive terminal payoff.
Strong separation therefore yields a vector
\[
f=(f_0,f_1,\ldots,f_K)\in(0,\infty)^{K+1}
\]
such that $f\cdot y\le0$ for every $y\in\mathfrak C$. Applying this inequality to attainable claims gives
\begin{equation}
\label{eq:single-separation}
C(0)
\ge
\sum_{k=1}^K\frac{f_k}{f_0}V(1;\omega_k).
\end{equation}
Applying \eqref{eq:single-separation} to one unit and minus one unit of the riskless asset shows
\[
\sum_{k=1}^K\frac{f_k}{f_0}=1.
\]
Hence
\[
Q(\{\omega_k\})=\frac{f_k}{f_0},
\qquad k=1,\ldots,K,
\]
defines a full-support probability measure.

For a one-unit long position in asset $m$, \eqref{eq:single-separation} yields
\[
\overline S_m(0)\ge E_Q[\underline S_m(1)],
\]
whereas a one-unit short position yields
\[
\underline S_m(0)\le E_Q[\overline S_m(1)].
\]
Therefore the intervals
\[
[\underline S_m(0),\overline S_m(0)]
\quad\text{and}\quad
[E_Q[\underline S_m(1)],E_Q[\overline S_m(1)]]
\]
have a nonempty intersection. Choose $S_m^*(0)$ in this intersection and select $\alpha_m\in[0,1]$ such that
\[
S_m^*(0)
=(1-\alpha_m)E_Q[\underline S_m(1)]
+\alpha_m E_Q[\overline S_m(1)].
\]
Set
\[
S_m^*(1)
=(1-\alpha_m)\underline S_m(1)
+\alpha_m\overline S_m(1).
\]
Then $S_m^*(t)$ lies inside the bid-ask spread and
\[
S_m^*(0)=E_Q[S_m^*(1)].
\]
Thus $(Q,S^*)\in\mathcal M$.
\end{proof}

Theorem \ref{thm:single-FTAP} shows that the family $\mathcal P$ of probability models and the full-support pricing measure $Q$ play complementary roles. The probability family determines which states are regarded as possible, whereas the separating measure prices all states strictly positively. The construction does not assert that $Q$ belongs to $\mathcal P$.
Its purpose is to certify the absence of an admissible statewise arbitrage.
Nevertheless, $Q$ and $R_{\mathcal P}$ from Lemma
\ref{lem:single-support-reduction} are equivalent because both have full
support on the finite state space.

The price system is generally nonunique. Therefore a contingent claim is not assigned one canonical linear price. When $\mathcal M\ne\varnothing$, every $(Q,S^*)\in\mathcal M$ produces a linear valuation $E_Q[X]$. For a cash-settled contingent claim $X$, the range of valuations induced by the consistent price systems is
\begin{equation}
\label{eq:single-pricing-bounds}
\underline\pi(X)
:=
\inf_{(Q,S^*)\in\mathcal M}E_Q[X],
\qquad
\overline\pi(X)
:=
\sup_{(Q,S^*)\in\mathcal M}E_Q[X].
\end{equation}
These are dual expectation bounds only.  Identifying them with exact
arbitrage-free extension prices or superhedging prices requires an explicit
market extension for $X$ and a corresponding superhedging-duality theorem,
neither of which is assumed here.

\subsection{Asset pricing with short-sale constraints}

We next examine how the dual characterization changes when risky assets cannot be sold short. This is not a cosmetic restriction on the strategy set. In the proof of Theorem \ref{thm:single-FTAP}, the one-unit short position is used to derive the lower inequality needed for a martingale shadow price. Once negative risky holdings are excluded, that test strategy is no longer admissible. Consequently, the dual equality is weakened to a supermartingale inequality.

We now prohibit short sales of risky assets by imposing
\begin{equation}
\label{eq:single-short-constraint}
h_m\ge0,
\qquad m=1,\ldots,M,
\end{equation}
while the position in the riskless asset remains unrestricted. The terminal liquidation value becomes
\begin{equation}
\label{eq:single-V1-short}
V^+(1)
=
h_0+\sum_{m=1}^M h_m\underline S_m(1).
\end{equation}
The corresponding no-arbitrage condition, denoted by $NA^+(\mathcal P)$, is Definition \ref{def:single-arbitrage} restricted to strategies satisfying \eqref{eq:single-short-constraint}, with $V(1)$ replaced by $V^+(1)$. Since every risky holding is now nonnegative, a downward conditional drift of the shadow price cannot be exploited through a short position. This observation motivates the following dual object.

\begin{definition}[Supermartingale consistent price system]
\label{def:single-super-CPS}
A pair $(\widehat Q,S^*)$ is called a supermartingale consistent price system if $\widehat Q$ is a probability measure with full support and, for each risky asset $m=1,\ldots,M$,
\[
S_m^*(t)\in[\underline S_m(t),\overline S_m(t)],
\qquad t=0,1,
\]
and
\[
S_m^*(0)\ge E_{\widehat Q}[S_m^*(1)].
\]
\end{definition}

\begin{theorem}[Single-period constrained FTAP]
\label{thm:single-short-FTAP}
The market satisfies $NA^+(\mathcal P)$ if and only if there exists a supermartingale consistent price system.
\end{theorem}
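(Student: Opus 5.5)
The plan follows the proof of Theorem \ref{thm:single-FTAP}, with two changes: the short positions are removed, and the resulting one-sided inequality is absorbed into a supermartingale shadow price.

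\emph{Sufficiency.} Let $(\widehat Q,S^*)$ be a supermartingale consistent price system and let $h$ satisfy \eqref{eq:single-short-constraint}. Since $h_m\ge0$ and $S_m^*(0)\le\overline S_m(0)$,
\[
C(0)=h_0+\sum_m h_m\overline S_m(0)\ge h_0+\sum_m h_mS_m^*(0).
\]
Applying the supermartingale inequality, which is multiplied by $h_m\ge0$, and then $S_m^*(1)\ge\underline S_m(1)$, gives
\[
C(0)\ge h_0+\sum_m h_mE_{\widehat Q}[S_m^*(1)]\ge E_{\widehat Q}[V^+(1)].
\]
Nonnegativity of $h_m$ is used precisely here, since it lets both inequalities go in the right direction. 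Now suppose $h$ were an arbitrage. By the remark after Definition \ref{def:single-arbitrage}, $V^+(1)\ge0$ with strict positivity in some state. Full support of $\widehat Q$ then gives $E_{\widehat Q}[V^+(1)]>0\ge C(0)$, a contradiction.

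\emph{Necessity.} Assume $NA^+(\mathcal P)$. Consider the set $\mathcal G^+$ of vectors $(-C(0),V^+(1;\omega_1),\ldots,V^+(1;\omega_K))$ over admissible $h$, and put $\mathfrak C^+=\mathcal G^+-\mathbb R_+^{K+1}$. Under the constraint, $h\mapsto(-C(0),V^+(1))$ is linear on the cone $\mathbb R\times\mathbb R_+^M$. Hence $\mathfrak C^+$ is the finitely generated cone spanned by $\pm$ the cash direction, the $M$ long-position vectors, and the negative unit vectors. In particular it is closed, convex and polyhedral, which is simpler than the unconstrained case because no netting argument is needed.

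Next I would show $\mathfrak C^+\cap\mathbb R_+^{K+1}=\{0\}$. If the only positive coordinate of a point were the initial one, adding that receipt to $h_0$ gives a strictly positive terminal payoff at zero cost, contradicting $NA^+(\mathcal P)$. Separation of a polyhedral cone from the positive orthant (the same step used in Theorem \ref{thm:single-FTAP}) then gives $f\in(0,\infty)^{K+1}$ with $f\cdot y\le0$ on $\mathfrak C^+$. Testing $\pm$ one unit of cash gives $\sum_k f_k/f_0=1$, so $\widehat Q(\{\omega_k\})=f_k/f_0$ is a full-support probability. Testing one long unit of asset $m$ gives
\[
\overline S_m(0)\ge E_{\widehat Q}[\underline S_m(1)].
\]

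\emph{Construction of the shadow price.} Set $S_m^*(0)=\overline S_m(0)$ and $S_m^*(1)=\underline S_m(1)$. Both lie in the spread, and the inequality above is exactly $S_m^*(0)\ge E_{\widehat Q}[S_m^*(1)]$. No interpolation via $\alpha_m$ is needed, because only one inequality has to be matched.

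The only step requiring real care is the separation with strictly positive $f$. I expect it to go through verbatim from the unconstrained proof, because the cone is polyhedral and meets the orthant only at the origin. The standard route is Stiemke's lemma: take the dual cone of $\mathfrak C^+$ and show that it contains a strictly positive vector.
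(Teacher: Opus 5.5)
Your proposal is correct and matches the paper's proof step for step. For sufficiency you use the same chain of inequalities, which relies on $h_m\ge0$; for necessity you use the same separation of the constrained attainable cone from the positive orthant, the same test strategies ($\pm$ one unit of cash and one unit long in asset $m$), and the same choice $S_m^*(0)=\overline S_m(0)$, $S_m^*(1)=\underline S_m(1)$. Your observation that $h\mapsto(-C(0),V^+(1))$ is linear on $\mathbb R\times\mathbb R_+^M$, so that $\mathfrak C^+$ is finitely generated and therefore closed, makes explicit what the paper only asserts when it says the attainable set is generated by a polyhedral cone.
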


\begin{proof}
Suppose first that $(\widehat Q,S^*)$ is a supermartingale consistent price system. For every admissible strategy,
\begin{align*}
C(0)
&\ge h_0+\sum_{m=1}^M h_mS_m^*(0)\\
&\ge E_{\widehat Q}\left[h_0+\sum_{m=1}^M h_mS_m^*(1)\right]\\
&\ge E_{\widehat Q}\left[h_0+\sum_{m=1}^M h_m\underline S_m(1)\right]
=E_{\widehat Q}[V^+(1)].
\end{align*}
Full support excludes an arbitrage exactly as in Theorem \ref{thm:single-FTAP}.

Conversely, assume $NA^+(\mathcal P)$. The attainable set is now generated by a polyhedral cone of strategies satisfying $h_m\ge0$. Applying the same separation argument as above yields a full-support probability measure $\widehat Q$ such that
\[
C(0)\ge E_{\widehat Q}[V^+(1)]
\]
for every constrained strategy. Applying this inequality to a one-unit long position in asset $m$ gives
\[
\overline S_m(0)\ge E_{\widehat Q}[\underline S_m(1)].
\]
Define
\[
S_m^*(0):=\overline S_m(0),
\qquad
S_m^*(1):=\underline S_m(1).
\]
Then $S_m^*$ lies inside the spread and satisfies
\[
S_m^*(0)\ge E_{\widehat Q}[S_m^*(1)].
\]
Thus $(\widehat Q,S^*)$ is a supermartingale consistent price system.
\end{proof}

We next formulate robust conditions under a nonempty family $\mathcal Q$ of probability measures. We assume that
\begin{equation}
\label{eq:single-Q-cover}
\sup_{Q\in\mathcal Q}Q(\{\omega\})>0,
\qquad \forall\omega\in\Omega.
\end{equation}

The lower expectation records the smallest expected future shadow price among the measures in $\mathcal Q$, while the upper expectation records the largest. Because a no-arbitrage proof must control the value of a whole nonnegative portfolio uniformly, these two choices lead to different conclusions. The lower inequality is easier to satisfy but does not uniformly dominate the future portfolio value; the upper inequality is stronger and supplies the estimate required in the sufficiency argument.

\begin{definition}[$\mathcal Q$-lower system]
\label{def:single-Q-lower}
A pair $(\mathcal Q,S^*)$ is called a $\mathcal Q$-lower-supermartingale consistent price system if $\mathcal Q$ satisfies \eqref{eq:single-Q-cover}, $S_m^*(t)\in[\underline S_m(t),\overline S_m(t)]$, and
\[
S_m^*(0)
\ge
\inf_{Q\in\mathcal Q}E_Q[S_m^*(1)],
\qquad m=1,\ldots,M.
\]
\end{definition}

\begin{definition}[$\mathcal Q$-upper system]
\label{def:single-Q-upper}
A pair $(\mathcal Q,S^*)$ is called a $\mathcal Q$-upper-supermartingale consistent price system if $\mathcal Q$ satisfies \eqref{eq:single-Q-cover}, $S_m^*(t)\in[\underline S_m(t),\overline S_m(t)]$, and
\[
S_m^*(0)
\ge
\sup_{Q\in\mathcal Q}E_Q[S_m^*(1)],
\qquad m=1,\ldots,M.
\]
\end{definition}

\begin{corollary}[Existence of a lower system]
\label{thm:single-lower-necessary}
If $NA^+(\mathcal P)$ holds, then a $\mathcal Q$-lower-supermartingale consistent price system exists for some nonempty family $\mathcal Q$.
\end{corollary}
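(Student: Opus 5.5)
The plan is to reduce the corollary to Theorem \ref{thm:single-short-FTAP} by taking a singleton family, as the introduction indicates. Assume $NA^+(\mathcal P)$. By Theorem \ref{thm:single-short-FTAP}, there is a supermartingale consistent price system $(\widehat Q,S^*)$. This means $\widehat Q$ is a probability measure with $\widehat Q(\{\omega\})>0$ for every $\omega\in\Omega$. It also means that, for each risky asset $m$, $S_m^*(t)\in[\underline S_m(t),\overline S_m(t)]$ for $t=0,1$ and $S_m^*(0)\ge E_{\widehat Q}[S_m^*(1)]$.

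I then set $\mathcal Q:=\{\widehat Q\}$, which is nonempty, and check the three requirements of Definition \ref{def:single-Q-lower} in turn. First, the covering condition \eqref{eq:single-Q-cover} holds because
\[
\sup_{Q\in\mathcal Q}Q(\{\omega\})=\widehat Q(\{\omega\})>0
\]
for every $\omega\in\Omega$, by the full support of $\widehat Q$. Second, the spread condition on $S^*$ is inherited unchanged from Definition \ref{def:single-super-CPS}. Third, for a singleton family the lower expectation is just the expectation under $\widehat Q$:
\[
\inf_{Q\in\mathcal Q}E_Q[S_m^*(1)]=E_{\widehat Q}[S_m^*(1)]\le S_m^*(0),
\qquad m=1,\ldots,M.
\]
Hence $(\mathcal Q,S^*)$ is a $\mathcal Q$-lower-supermartingale consistent price system.

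There is no real obstacle here; all the substantive work is in the separation argument behind Theorem \ref{thm:single-short-FTAP}. The only point needing care is that \eqref{eq:single-Q-cover} must hold for the chosen family, and full support of $\widehat Q$ guarantees this.

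The same argument also works for any family of full-support measures all of which make $S^*$ a supermartingale, for example by adding $R_{\mathcal P}$ from Lemma \ref{lem:single-support-reduction} when it qualifies. Since the infimum over a larger family can only decrease, the lower condition is preserved under such enlargements. I would mention this only as a remark, because the statement asks only for existence of some family.
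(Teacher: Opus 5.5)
Your proof is correct and is exactly the paper's argument: you invoke Theorem \ref{thm:single-short-FTAP} and take the singleton family $\mathcal Q=\{\widehat Q\}$, and full support of $\widehat Q$ gives \eqref{eq:single-Q-cover}. Your closing remark is fine but more restrictive than necessary: the infimum can only decrease, so any family containing $\widehat Q$ works, whether or not its other members make $S^*$ a supermartingale.
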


\begin{proof}
By Theorem \ref{thm:single-short-FTAP}, there exists a full-support supermartingale consistent price system $(\widehat Q,S^*)$. Taking $\mathcal Q=\{\widehat Q\}$ proves the claim.
\end{proof}

The basic estimate already indicates why the lower condition cannot be
used in the upper-system proof: for nonnegative holdings,
\[
\inf_{Q\in\mathcal Q}
E_Q\left[\sum_{m=1}^M h_mS_m^*(1)\right]
\ge
\sum_{m=1}^M h_m
\inf_{Q\in\mathcal Q}E_Q[S_m^*(1)],
\]
which is the opposite direction from that required to dominate the portfolio value uniformly.

The failure is substantive rather than merely a limitation of that proof.

\begin{example}[A lower system with an arbitrage]
\label{ex:single-lower-not-sufficient}
Let $\Omega=\{\omega_1,\omega_2\}$ and consider two risky assets under a
short-sale constraint.  Their initial spreads are identical:
\[
[\underline S_1(0),\overline S_1(0)]
=[\underline S_2(0),\overline S_2(0)]=[0.45,0.50].
\]
At time $1$ there is no spread, and the terminal prices are
\[
\begin{array}{c|cc}
 &\omega_1&\omega_2\\ \hline
S_1(1)&1.2&0.1\\
S_2(1)&0.1&1.2
\end{array}.
\]
Take $\mathcal Q=\{\delta_{\omega_1},\delta_{\omega_2}\}$ and choose the
shadow prices $S_1^*(0)=S_2^*(0)=0.50$ and $S_m^*(1)=S_m(1)$.  
The family $\mathcal Q$ satisfies \eqref{eq:single-Q-cover}, for $m=1,2$,
\[
S_m^*(0)=0.50\ge
\inf_{Q\in\mathcal Q}E_Q[S_m^*(1)]=0.1.
\]
Hence a lower system exists.  Nevertheless, the constrained portfolio
$h_0=-1$ and $h_1=h_2=1$ has
\[
C(0)=-1+0.50+0.50=0,
\qquad
V^+(1;\omega_1)=V^+(1;\omega_2)=-1+1.2+0.1=0.3.
\]
It is an arbitrage.  Thus the lower condition is not sufficient.
\end{example}

\begin{theorem}[Sufficient upper condition]
\label{thm:single-upper-sufficient}
If there exists a $\mathcal Q$-upper-supermartingale consistent price system, then $NA^+(\mathcal P)$ holds.
\end{theorem}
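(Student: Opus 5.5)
The plan is to argue by contradiction, following the chain of inequalities in the sufficiency part of Theorem \ref{thm:single-short-FTAP}, but with a single measure replaced by the family $\mathcal Q$. Suppose $(\mathcal Q,S^*)$ is a $\mathcal Q$-upper-supermartingale consistent price system and $h$ is a constrained arbitrage, so $h_m\ge0$, $C(0)\le0$, $V^+(1;\omega)\ge0$ for all $\omega$, and $V^+(1;\omega_0)>0$ for some $\omega_0$. This last reformulation uses the Remark after Definition \ref{def:single-arbitrage}, equivalently Lemma \ref{lem:single-support-reduction}.

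The first step is to bound the cost from below. Since $h_m\ge0$ and $S_m^*(0)\le\overline S_m(0)$, we have $C(0)\ge h_0+\sum_m h_mS_m^*(0)$. Next I apply the upper inequality asset by asset, using nonnegativity of $h_m$:
\[
h_0+\sum_{m=1}^M h_mS_m^*(0)\ge h_0+\sum_{m=1}^M h_m\sup_{Q\in\mathcal Q}E_Q[S_m^*(1)]\ge \sup_{Q\in\mathcal Q}E_Q\Bigl[h_0+\sum_{m=1}^M h_mS_m^*(1)\Bigr].
\]
The last inequality holds because a sum of suprema dominates the supremum of the sum. This is exactly the direction that fails for the lower system, as remarked before Example \ref{ex:single-lower-not-sufficient}. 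Finally, $S_m^*(1)\ge\underline S_m(1)$ and $h_m\ge0$ give $E_Q[h_0+\sum_m h_mS_m^*(1)]\ge E_Q[V^+(1)]$ for each $Q$. Altogether,
\[
0\ge C(0)\ge\sup_{Q\in\mathcal Q}E_Q[V^+(1)].
\]

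The remaining step is to contradict this using the cover condition \eqref{eq:single-Q-cover}. This is the point where $\mathcal Q$ replaces the full-support measure $\widehat Q$. Choose $Q_0\in\mathcal Q$ with $Q_0(\{\omega_0\})>0$. Since $V^+(1)\ge0$ everywhere and $V^+(1;\omega_0)>0$, we get $E_{Q_0}[V^+(1)]\ge Q_0(\{\omega_0\})V^+(1;\omega_0)>0$. Hence $\sup_{Q\in\mathcal Q}E_Q[V^+(1)]>0$, which contradicts the display above, so $NA^+(\mathcal P)$ holds.

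I expect the only delicate point to be the interchange of the supremum with the nonnegative linear combination. It is resolved by the subadditivity of the supremum and the sign constraint $h_m\ge0$; short positions would break it. There is no compactness or attainment issue, since we only need some $Q_0$ charging $\omega_0$, and the cover condition supplies it directly.
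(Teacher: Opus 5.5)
Your proposal is correct and follows the paper's proof essentially step for step: the same chain $C(0)\ge h_0+\sum_m h_mS_m^*(0)\ge\cdots\ge\sup_{Q\in\mathcal Q}E_Q[V^+(1)]$, using $h_m\ge0$ to move the supremum outside the sum, followed by the cover condition \eqref{eq:single-Q-cover} to force $\sup_{Q\in\mathcal Q}E_Q[V^+(1)]>0$. Your explicit choice of a $Q_0\in\mathcal Q$ charging $\omega_0$ only spells out the final step that the paper states in one line.
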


\begin{proof}
Let $(\mathcal Q,S^*)$ satisfy Definition \ref{def:single-Q-upper}. For every constrained strategy,
\begin{align*}
C(0)
&\ge h_0+\sum_{m=1}^M h_mS_m^*(0)\\
&\ge h_0+\sum_{m=1}^M h_m\sup_{Q\in\mathcal Q}E_Q[S_m^*(1)]\\
&\ge \sup_{Q\in\mathcal Q}E_Q\left[h_0+\sum_{m=1}^M h_mS_m^*(1)\right]\\
&\ge \sup_{Q\in\mathcal Q}E_Q[V^+(1)].
\end{align*}
If an arbitrage existed, condition \eqref{eq:single-Q-cover} would imply
\[
\sup_{Q\in\mathcal Q}E_Q[V^+(1)]>0,
\]
contradicting $C(0)\le0$.
\end{proof}

Combining Theorems \ref{thm:single-lower-necessary} and \ref{thm:single-upper-sufficient}, we obtain a useful hierarchy. No arbitrage guarantees the existence of a lower system, at least by taking the singleton family generated by the supermartingale pricing measure. Conversely, an upper system controls every measure in the family and is therefore strong enough to rule out arbitrage. Without additional convexity, pasting or stability assumptions on $\mathcal Q$, the two conditions should not be identified as one necessary-and-sufficient statement.

\section{Multi-period asset pricing}
\label{sec:multi}

The preceding section describes the static geometry of the market. We now allow the investor to rebalance the portfolio over several dates. The extension introduces three additional issues. First, the strategy must be adapted to the information available at each node. Second, all intermediate trades must be financed by the current portfolio. Third, a shadow price selected at one node must be compatible with shadow prices at its successor nodes. The last requirement is the main reason that the multi-period theory cannot be obtained by applying the single-period theorem independently on each trading interval.

\subsection{Basic multi-period market model}

We first specify the tree, filtration, bid-ask processes and self-financing condition. We retain the original path space with a possibly random initial state. Therefore time-zero quantities are allowed to depend on $X_0$, and the initial cost will be an $\mathcal F_0$-measurable random variable rather than a deterministic scalar.

Let $T\in\mathbb N$ be a finite time horizon and let
\[
E=\{\omega_1,\ldots,\omega_K\}
\]
be the set of possible states at each date. We retain the path space
\begin{equation}
\label{eq:path-space}
\Omega=E^{T+1}.
\end{equation}
A path is denoted by
\[
\omega=(\omega_0,\omega_1,\ldots,\omega_T).
\]
Let $X_t(\omega)=\omega_t$ and define the canonical filtration
\[
\mathbb F=(\mathcal F_t)_{t=0}^T,
\qquad
\mathcal F_t=\sigma(X_0,\ldots,X_t).
\]
In particular, $\mathcal F_0=\sigma(X_0)$ need not be trivial.

Model uncertainty is described by a nonempty family $\mathcal P$ of probability measures on $(\Omega,\mathcal F_T)$. We assume
\begin{equation}
\label{eq:multi-P-support}
\sup_{P\in\mathcal P}P(\{\omega\})>0,
\qquad \forall\omega\in\Omega.
\end{equation}
Applying Lemma \ref{lem:single-support-reduction} to the finite path space
shows that there is a full-support
$R_{\mathcal P}\in\operatorname{conv}(\mathcal P)$.  Hence the family
$\mathcal P$ again enters no arbitrage only through the set of paths covered
by its supports; it does not create a genuinely nondominated finite-tree
model.

For $t=0,1,\ldots,T$, let
\[
\mathcal N_t=E^{t+1}
\]
be the set of nodes at time $t$. For
\[
a=(x_0,\ldots,x_t)\in\mathcal N_t,
\]
define the corresponding atom
\[
A_a
=
\{\omega\in\Omega:(X_0(\omega),\ldots,X_t(\omega))=a\}.
\]
For $t<T$, the successor set is
\[
\operatorname{Succ}_t(a)
=
\{(x_0,\ldots,x_t,x_{t+1}):x_{t+1}\in E\}.
\]

The riskless asset is the numeraire, so
\[
\underline S_0(t)=\overline S_0(t)=1,
\qquad t=0,\ldots,T.
\]
For each risky asset $m=1,\ldots,M$, the bid and ask prices at node $a\in\mathcal N_t$ are
\[
\underline S_m(t,a)
\quad\text{and}\quad
\overline S_m(t,a),
\]
and satisfy
\begin{equation}
\label{eq:multi-spread}
0\le\underline S_m(t,a)\le\overline S_m(t,a).
\end{equation}
Along a path, we write $\underline S_m(t)$ and $\overline S_m(t)$ for the corresponding $\mathcal F_t$-measurable random variables.

A trading strategy is an adapted portfolio process
\[
h(t)=(h_0(t),h_1(t),\ldots,h_M(t)),
\qquad t=0,\ldots,T.
\]
Thus $h_m(t)$ is $\mathcal F_t$-measurable and represents the number of units of asset $m$ held immediately after trading at time $t$. Define
The initial discounted cost is the $\mathcal F_0$-measurable random variable
\begin{equation}
\label{eq:multi-C0}
C(0)
=
h_0(0)+
\sum_{m=1}^M
\left[
h_m^+(0)\overline S_m(0)
-
h_m^-(0)\underline S_m(0)
\right].
\end{equation}
Equivalently, at an initial node $a\in\mathcal N_0$,
\[
C(0,a)
=
h_0(0,a)+
\sum_{m=1}^M
\left[
h_m^+(0,a)\overline S_m(0,a)
-
h_m^-(0,a)\underline S_m(0,a)
\right].
\]

The discounted liquidation value at time $t$ is
\begin{equation}
\label{eq:multi-Vt}
V(t)
=
h_0(t)+
\sum_{m=1}^M
\left[
h_m^+(t)\underline S_m(t)
-
h_m^-(t)\overline S_m(t)
\right].
\end{equation}
For $t=1,\ldots,T$, define
\[
\Delta h_m(t)=h_m(t)-h_m(t-1).
\]
The strategy is self-financing if, pathwise,
\begin{equation}
\label{eq:multi-self-financing}
\Delta h_0(t)
+
\sum_{m=1}^M
\left[
(\Delta h_m(t))^+\overline S_m(t)
-
(\Delta h_m(t))^-\underline S_m(t)
\right]
\le0,
\qquad t=1,\ldots,T.
\end{equation}
The inequality allows free disposal; no external capital is injected after time $0$. When equality holds, every available unit of wealth is carried forward through the portfolio. Allowing the inequality is convenient for cone arguments because an investor may always discard wealth without creating an arbitrage opportunity.

The distinction between $C(0)$ and $V(t)$ should be emphasized. The former is the amount needed to establish the initial portfolio at the prevailing bid and ask prices and is known once the initial node is observed. The latter is the amount obtained by liquidating the current portfolio at time $t$. Under transaction costs, these two operations use opposite sides of the spread and therefore cannot be represented by the same linear functional.

\subsection{Multi-period asset pricing without short-sale constraints}
\label{subsec:multi-unconstrained}

We now develop the unconstrained multi-period fundamental theorem. As in the single-period case, the no-arbitrage condition is statewise. However, because $C(0)$ is $\mathcal F_0$-measurable, its sign must be checked at every initial node. After defining arbitrage, we introduce solvency cones and their duals. These cones encode the local bid-ask geometry, while the backward modification records whether a current dual vector admits future continuation.

\begin{definition}[Multi-period arbitrage under model uncertainty]
\label{def:multi-arbitrage}
A self-financing strategy $h$ is called a multi-period arbitrage opportunity under model uncertainty if

\emph{(i)} $C(0;\omega)\le0$ for every $\omega\in\Omega$;

\emph{(ii)} $V(T;\omega)\ge0$ for every $\omega\in\Omega$, and
\[
\sup_{P\in\mathcal P}E_P[V(T)]>0.
\]
The market satisfies $NA(\mathcal P)$ if no such strategy exists.
\end{definition}

\begin{remark}
Because $C(0)$ is $\mathcal F_0$-measurable, condition \emph{(i)} is equivalently
\[
C(0,a)\le0,
\qquad \forall a\in\mathcal N_0.
\]
The initial cost is not assumed to be deterministic.
\end{remark}

To pass from attainable portfolio values to a dual pricing process, it is convenient to describe the market locally at each node. The solvency cone consists of portfolios that can be liquidated into a nonnegative amount of the numeraire. Its dual consists of nonnegative linear functionals compatible with the bid-ask spread. After normalizing the numeraire component, these dual vectors are precisely candidate shadow prices.

For each node $a\in\mathcal N_t$, define the solvency cone
\begin{equation}
\label{eq:multi-solvency-cone}
K_t(a)
=
\left\{
y\in\mathbb R^{M+1}:
y_0+
\sum_{m=1}^M y_m^+\underline S_m(t,a)
-
\sum_{m=1}^M y_m^-\overline S_m(t,a)
\ge0
\right\}.
\end{equation}
Its dual cone is
\begin{equation}
\label{eq:multi-dual-cone}
K_t^*(a)
=
\left\{
\lambda(1,s_1,\ldots,s_M):
\lambda\ge0,
\quad
s_m\in[\underline S_m(t,a),\overline S_m(t,a)]
\right\}.
\end{equation}
We say that efficient friction holds if
\begin{equation}
\label{eq:efficient-friction}
\operatorname{int}K_t^*(a)\ne\varnothing,
\qquad t=0,\ldots,T,
\quad a\in\mathcal N_t.
\end{equation}

A vector in $K_t^*(a)$ is locally compatible with the spread at node $a$, but local compatibility alone is not sufficient. It may be impossible to choose successor shadow prices whose conditional mean equals the current vector. We therefore remove, backward in time, those dual vectors that cannot be continued to the next date. This is the backward part of the backward--forward construction.

To record future continuation, define
\begin{equation}
\label{eq:modified-dual-terminal}
\widetilde K_T^*(a)=K_T^*(a),
\qquad a\in\mathcal N_T,
\end{equation}
and recursively
\begin{equation}
\label{eq:modified-dual-recursion}
\widetilde K_t^*(a)
=
K_t^*(a)
\cap
\operatorname{cl\,conv}
\left(
\bigcup_{a'\in\operatorname{Succ}_t(a)}
\widetilde K_{t+1}^*(a')
\right),
\qquad t=T-1,\ldots,0.
\end{equation}
Every martingale shadow price necessarily takes values in the normalized sections of these modified cones. The converse construction additionally requires compatible convex representations across successor nodes; this compatibility is supplied by the finite-dimensional dual construction in the proof below.

The preceding recursion describes the admissible region for a dynamically consistent shadow price. We now introduce the actual price system. It consists of one full-support measure on the path space and one adapted shadow price process. The shadow price must stay inside the original spread, while the martingale condition couples its values across successive nodes.

\begin{definition}[Multi-period consistent price system]
\label{def:multi-CPS}
A pair $(Q,S^*)$ is called a multi-period consistent price system if $Q$ is a probability measure satisfying
\[
Q(\{\omega\})>0,
\qquad \forall\omega\in\Omega,
\]
and $S^*=(S_1^*,\ldots,S_M^*)$ is an adapted $Q$-martingale such that, for each risky asset $m=1,\ldots,M$,
\[
S_m^*(t,a)
\in
[\underline S_m(t,a),\overline S_m(t,a)],
\qquad t=0,\ldots,T,
\quad a\in\mathcal N_t.
\]
Equivalently,
\[
S_m^*(t)=E_Q[S_m^*(t+1)\mid\mathcal F_t],
\qquad t=0,\ldots,T-1.
\]
\end{definition}

Denote by
\begin{equation}
\label{eq:multi-M}
\mathcal M_T
=
\{(Q,S^*):(Q,S^*)\text{ is a multi-period consistent price system}\}
\end{equation}
the set of all multi-period consistent price systems. As in the single-period market, this set is typically non-singleton. Different measures and different shadow prices may support the same absence-of-arbitrage conclusion.

Before stating the fundamental theorem, we record the finite-state dual construction used in the converse implication. The lemma packages the separation argument in process form. The numeraire component serves as a density process, and the ratios of the risky dual components to the numeraire component produce the shadow prices.

\begin{lemma}[Finite-state dual process]
\label{lem:finite-dual}
Assume efficient friction and fix any full-support reference measure $R$ on
$\Omega$.  In the finite-state tree, $NA(\mathcal P)$ implies the existence
of an adapted dual process
\[
Z(t)=(Z_0(t),Z_1(t),\ldots,Z_M(t))
\]
such that $Z_0(t)>0$, every component of $Z$ is an $R$-martingale, and
\[
\underline S_m(t)Z_0(t)
\le
Z_m(t)
\le
\overline S_m(t)Z_0(t),
\qquad m=1,\ldots,M.
\]
The random variable
\[
\frac{dQ}{dR}
=
\frac{Z_0(T)}{Z_0(0)}.
\]
defines a full-support probability measure $Q$.
\end{lemma}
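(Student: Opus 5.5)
Since $\Omega$ is finite, I will work on each initial subtree separately and apply a finite-dimensional separation argument. This mirrors the proof of Theorem~\ref{thm:single-FTAP}.

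\textbf{Step 1 (terminal value set).} Fix an initial node $a\in\mathcal N_0$ with atom $A_a$. Let $\mathfrak C_a\subset\mathbb R^{A_a}$ be the set of terminal values $V(T)$ restricted to $A_a$. These come from self-financing strategies started with zero initial capital, i.e.\ with $C(0,a)\le0$, minus free disposal $\mathbb R_+^{A_a}$. Equivalently, $\mathfrak C_a$ is the sum over nodes $b$ of the subtree of $-K_t(b)\,\mathbf 1_{A_b}$, converted to the numeraire at $T$. Each $K_t(b)$ is a polyhedral cone, since it is generated by finitely many elementary long, short and cash directions, exactly as in the single-period proof. A finite sum of polyhedral cones is polyhedral, so $\mathfrak C_a$ is a closed convex cone. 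By Lemma~\ref{lem:single-support-reduction} and Definition~\ref{def:multi-arbitrage}, $NA(\mathcal P)$ gives $\mathfrak C_a\cap\mathbb R_+^{A_a}=\{0\}$. This uses the fact that a strategy defined on one initial node and zero elsewhere is still admissible.

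\textbf{Step 2 (strictly positive dual vector).} Separation, for example Stiemke's lemma for polyhedral cones, yields a vector $f^a\in(0,\infty)^{A_a}$ with $f^a\cdot y\le0$ on $\mathfrak C_a$. I then paste these vectors together and set $Z_0(T)(\omega)=f^a(\omega)/R(\{\omega\})$ for $\omega\in A_a$. To fix the relative scaling across initial nodes I normalize so that $E_R[Z_0(T)\mid\mathcal F_0]$ is a fixed positive $\mathcal F_0$-measurable constant. Any positive choice works, because the cones do not couple different initial nodes.

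\textbf{Step 3 (dual process).} Set $Z_0(t)=E_R[Z_0(T)\mid\mathcal F_t]>0$. Define $Q$ by $dQ/dR=Z_0(T)/Z_0(0)$; it has full support, and on each $A_a$ it is a probability conditional on $\mathcal F_0$. The risky components need a second step. Consider the $\mathbb R^{M+1}$-valued terminal positions rather than only the numeraire value. Separate the polyhedral cone $\sum_b -K_t(b)\mathbf 1_{A_b}$ of attainable position vectors from $\bigl(K_T\bigr)$-solvent positions. Efficient friction guarantees that $\operatorname{int}K_T^*$ is nonempty, so the separating vector can be chosen in the interior. This yields a terminal vector $Z(T)\in K_T^*$ with $Z_0(T)>0$, and I define $Z(t)=E_R[Z(T)\mid\mathcal F_t]$. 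The key claim is that $Z(t)\in K_t^*$. The argument: the strategy ``trade $y\in K_t(b)$ at node $b$ and hold until $T$'' is attainable. Testing the separation inequality against it gives $E_R[Z(T)\cdot y\,\mathbf 1_{A_b}]\ge0$, i.e.\ $Z(t,b)\cdot y\ge0$ for all $y\in K_t(b)$. Hence $Z(t,b)\in K_t^*(b)$, which by \eqref{eq:multi-dual-cone} is exactly the bid-ask sandwich $\underline S_mZ_0\le Z_m\le\overline S_mZ_0$.

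\textbf{Main obstacle.} The hardest step is the separation in the vector-valued setting. It must produce a strictly positive dual vector with $Z(T)$ in the interior of the dual cones, and it must do so for the whole subtree simultaneously. I would first prove a multi-asset version of the no-arbitrage condition: the attainable position cone $A$ satisfies $A\cap L^0(K_T)\subseteq L^0(K_T\cap -K_T)$. Under efficient friction the right-hand side reduces to $\{0\}$ after the numeraire liquidation. Then I would apply a Kreps--Yan type argument in finite dimensions: take $Z(T)$ as the average of vectors in the relative interior of the dual of $A$, one for each extreme ray of $L^0(K_T)$. This yields interiority, which gives $Z_0>0$ and strictness. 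If this becomes technical, I would instead cite the finite-state Kabanov--Stricker/Schachermayer theorem on each initial subtree, as the introduction suggests.
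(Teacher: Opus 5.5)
\textbf{Gap: the interior separation in Step 3 is impossible in general.} You require the separating vector to lie in $\operatorname{int}K_T^*$. You plan to get it from the condition $A\cap L^0(K_T)\subseteq L^0(K_T\cap -K_T)$, which under efficient friction reads $A\cap L^0(K_T)=\{0\}$. Neither claim follows from $NA(\mathcal P)$. No arbitrage only says that an attainable solvent position has zero liquidation value, not that the position vanishes.

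Take $M=1$, $T=1$, with deterministic spreads $[\underline S_1(0),\overline S_1(0)]=[1,2]$ and $[\underline S_1(1),\overline S_1(1)]=[2,3]$. Efficient friction holds. The pair consisting of $S^*\equiv 2$ and any full-support $Q$ is a consistent price system, so $NA(\mathcal P)$ holds by the easy half of Theorem~\ref{thm:multi-FTAP}. Yet buying one unit at the ask with borrowed cash gives the terminal position $\xi=(-2,1)$. This $\xi$ lies in $A\cap L^0(K_1)$ and is nonzero, while $K_1\cap -K_1=\{0\}$.

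Moreover, any $Z(1)=\lambda(1,s)$ with $\lambda>0$ and $s\in(2,3)$ gives $E_R[Z(1)\cdot\xi]=E_R[\lambda(s-2)]>0$, so no separating vector lies in the interior. Indeed, every consistent price system in this market has $S^*(1)=\underline S_1(1)$, on the boundary of the spread. Your Kreps--Yan averaging fails for the same reason. The vector $\xi\mathbf 1_{\{\omega\}}$ spans an extreme ray of $L^0(K_1)$, and it is itself attainable: buy at time $0$, then sell at the bid $2$ on the other paths. Efficient friction only makes $\operatorname{int}K_T^*$ nonempty. Interior (strictly consistent) dual vectors correspond to robust no-arbitrage, which is strictly stronger than $NA(\mathcal P)$. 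So your fallback citation is valid only in the weak form $Z(t)\in K_t^*$, not in the strictly consistent form.

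\textbf{What is actually needed.} The lemma asks only for $Z(t)\in K_t^*$ with $Z_0>0$. Positivity of $Z_0$ comes from positivity of the terminal cash component plus the martingale property, not from interiority. So separate $A$ from the cash-only cone instead of from the solvent cone. On each initial subtree, $NA(\mathcal P)$ says exactly that $c\,e_0\in A$ with $c\ge 0$ forces $c=0$. Since $A$ is polyhedral, hence closed, you can separate it from the simplex spanned by the vectors $e_0\mathbf 1_{\{\omega\}}$, $\omega\in A_a$. This gives $Z(T)$ with $E_R[Z(T)\cdot\eta]\le 0$ for all $\eta\in A$ and $Z_0(T)>0$ on every path. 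Your test-strategy argument then yields $Z(t)=E_R[Z(T)\mid\mathcal F_t]\in K_t^*$ at every node, hence the bid-ask sandwich, and $Z_0(t)>0$.

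This also repairs the disconnect between your Steps 2 and 3. The scalar Stiemke vector $f^a$ does not by itself produce risky components, and its density need not coincide with the cash component of the vector from Step 3. Hence $Q$ must be defined from the latter. With this change your route is a valid alternative to the paper's Farkas multiplier bookkeeping, in which flow equalities give the martingale property and purchase/sale coefficients give the spread inequalities. Like the paper's argument, it needs only polyhedrality, not efficient friction.
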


\begin{proof}
Because a time-zero strategy may be chosen separately on each
$\mathcal F_0$-atom, $NA(\mathcal P)$ holds on every subtree rooted at
$a\in\mathcal N_0$.  Fix one such subtree.  Split every trade into
nonnegative purchase and sale variables and add a nonnegative
free-disposal slack to each self-financing inequality.  More explicitly,
at a noninitial node $b$ with predecessor $b^-$, write
\begin{align*}
h_m(t,b)-h_m(t-1,b^-)&=u_m(t,b)-v_m(t,b),\\
h_0(t,b)-h_0(t-1,b^-)
&+\sum_{m=1}^M
\bigl(\overline S_m(t,b)u_m(t,b)
-\underline S_m(t,b)v_m(t,b)\bigr)
+d(t,b)=0,
\end{align*}
where $u_m,v_m,d\ge0$.  Use the same equations relative to zero holdings at
the initial node, and append a terminal bid-ask conversion into cash.  The
strategy, financing, and liquidation constraints are then a finite linear
system whose projection is polyhedral.  The alternative
system seeking a nonnegative, nonzero terminal cash payoff from nonpositive
initial cost is infeasible.  Farkas' lemma supplies strictly positive
terminal cash multipliers and node multipliers
$\zeta_i(t,b)$, $i=0,\ldots,M$.

Collecting the coefficients of the unrestricted holdings gives the flow
equalities
\begin{equation}
\label{eq:dual-flow-unconstrained}
\zeta_i(t,b)
=
\sum_{c\in\operatorname{Succ}_t(b)}\zeta_i(t+1,c),
\qquad i=0,\ldots,M,
\end{equation}
whereas the purchase and sale coefficients give
\begin{equation}
\label{eq:dual-spread-raw}
\underline S_m(t,b)\zeta_0(t,b)
\le \zeta_m(t,b)
\le \overline S_m(t,b)\zeta_0(t,b).
\end{equation}
These are precisely the finite-dimensional transpose conditions of the market.  Efficient friction gives closedness of the attainable cone;
the same multiplier construction is the finite-state FTAP of
\citet{Kabanov01} (see also \citet{Schachermayer04}).  Strict positivity of
the terminal cash multipliers and \eqref{eq:dual-flow-unconstrained} imply
$\zeta_0(t,b)>0$ at every node.

For a node $b$, set
\[
Z_i(t,b)=\frac{\zeta_i(t,b)}{R(A_b)}.
\]
Since $R$ has full support, this is well defined, and
\[
E_R[Z_i(t+1)\mid A_b]
=\frac{1}{R(A_b)}
\sum_{c\in\operatorname{Succ}_t(b)}\zeta_i(t+1,c)
=Z_i(t,b).
\]
Thus every $Z_i$ is an $R$-martingale, and
\eqref{eq:dual-spread-raw} gives the stated bid-ask inequalities.  Applying
the construction on each initial subtree and pasting the node multipliers
produces one adapted process on $\Omega$.

Here $Z_0(0)$ is $\mathcal F_0$-measurable.  Therefore
\[
E_R\left[\left.\frac{Z_0(T)}{Z_0(0)}\right|\mathcal F_0\right]
=\frac{E_R[Z_0(T)\mid\mathcal F_0]}{Z_0(0)}=1.
\]
The density is strictly positive and has expectation one, so it defines the
claimed full-support probability measure.
\end{proof}

The lemma converts the global separation of terminal claims into local one-step relations along the tree. This conversion is what replaces the informal idea of independently pasting one-period systems. We can now state the multi-period fundamental theorem.

\begin{theorem}[Multi-period fundamental theorem of asset pricing]
\label{thm:multi-FTAP}
Assume efficient friction. Then
\[
NA(\mathcal P)
\quad\Longleftrightarrow\quad
\mathcal M_T\ne\varnothing.
\]
\end{theorem}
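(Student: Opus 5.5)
The plan is to prove the two implications separately. Sufficiency generalizes the first half of Theorem \ref{thm:single-FTAP} via a supermartingale estimate on the shadow-price wealth. Necessity is read off from Lemma \ref{lem:finite-dual}.

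\emph{Sufficiency.} Let $(Q,S^*)\in\mathcal M_T$ and let $h$ be self-financing. Define the shadow value
\[
W(t)=h_0(t)+\sum_{m=1}^M h_m(t)S_m^*(t).
\]
\begin{enumerate}
\item Since $S_m^*(0)\in[\underline S_m(0),\overline S_m(0)]$, we get $C(0)\ge W(0)$ pathwise, exactly as in the single-period argument.
\item For $t\ge1$, the self-financing inequality \eqref{eq:multi-self-financing}, together with $S_m^*(t)$ lying in the spread, gives
\[
\Delta h_0(t)+\sum_m \Delta h_m(t)S_m^*(t)\le0.
\]
Hence
\[
W(t)\le h_0(t-1)+\sum_m h_m(t-1)S_m^*(t).
\]
\item Taking $E_Q[\,\cdot\mid\mathcal F_{t-1}]$ and using that $h(t-1)$ is $\mathcal F_{t-1}$-measurable and $S^*$ is a $Q$-martingale yields
\[
E_Q[W(t)\mid\mathcal F_{t-1}]\le W(t-1).
\]
On the finite tree every quantity is bounded, so no integrability issue arises.
\item At $T$, the spread inequalities give $V(T)\le W(T)$.
\end{enumerate}
Chaining these, $C(0)\ge E_Q[V(T)\mid\mathcal F_0]$ on every initial atom. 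Now suppose $h$ were an arbitrage. Then $V(T)\ge0$, and by Lemma \ref{lem:single-support-reduction} applied on the path space, $V(T)>0$ on some path $\omega$. That path lies in some atom $A_a$ with $a\in\mathcal N_0$. Because $Q$ has full support, $E_Q[V(T)\mid\mathcal F_0]>0$ on $A_a$. This gives $C(0,a)>0$, contradicting condition (i) of Definition \ref{def:multi-arbitrage}.

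\emph{Necessity.} Assume $NA(\mathcal P)$ and efficient friction. Fix the full-support measure $R=R_{\mathcal P}$ and take the dual process $Z$ from Lemma \ref{lem:finite-dual}, with $Q$ defined by $dQ/dR=Z_0(T)/Z_0(0)$.
\begin{enumerate}
\item Set $S_m^*(t)=Z_m(t)/Z_0(t)$. This is well defined because $Z_0>0$, and it is adapted. The bid-ask inequalities of the lemma give $S_m^*(t)\in[\underline S_m(t),\overline S_m(t)]$ at every node.
\item The density process of $Q$ with respect to $R$ is
\[
D(t)=E_R\!\left[\tfrac{Z_0(T)}{Z_0(0)}\,\middle|\,\mathcal F_t\right]=Z_0(t)/Z_0(0).
\]
The division by the $\mathcal F_0$-measurable $Z_0(0)$ commutes with conditioning because $Z_0(0)$ is $\mathcal F_t$-measurable.
\item The conditional Bayes formula then gives
\[
E_Q[S_m^*(t+1)\mid\mathcal F_t]=\frac{E_R[D(t+1)S_m^*(t+1)\mid\mathcal F_t]}{D(t)}=\frac{E_R[Z_m(t+1)\mid\mathcal F_t]}{Z_0(t)}=\frac{Z_m(t)}{Z_0(t)}=S_m^*(t),
\]
using that $Z_m$ is an $R$-martingale.
\end{enumerate}
Hence $(Q,S^*)\in\mathcal M_T$.

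The substantive work, the separation producing $Z$, is already packaged in Lemma \ref{lem:finite-dual}. The step that needs the most care is the handling of the nontrivial $\mathcal F_0$. In the sufficiency direction, the contradiction must be obtained atom by atom, because $C(0)$ is only $\mathcal F_0$-measurable. In the necessity direction, the normalization by $Z_0(0)$ must be verified not to destroy the martingale property. I would also check that $Q$ is a genuine probability measure: its total mass is $E_R[E_R[\,\cdot\mid\mathcal F_0]]=1$, and it does not matter that $Q|_{\mathcal F_0}=R|_{\mathcal F_0}$ is arbitrary.
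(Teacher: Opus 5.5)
Your proposal is correct and follows the paper's proof essentially step for step. For sufficiency, you use the shadow-wealth supermartingale $W$ with the atomwise bound $E_Q[V(T)\mid\mathcal F_0]\le C(0)$; for necessity, you set $S_m^*=Z_m/Z_0$ from Lemma~\ref{lem:finite-dual} and apply Bayes' formula with the $\mathcal F_0$-measurable factor $Z_0(0)$ retained. Your specific choice $R=R_{\mathcal P}$ is harmless, since the lemma allows any full-support reference measure.
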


\begin{proof}
Suppose first that $(Q,S^*)\in\mathcal M_T$. Define the shadow value
\begin{equation}
\label{eq:multi-shadow-value}
W^*(t)
=
h_0(t)+\sum_{m=1}^M h_m(t)S_m^*(t).
\end{equation}
Because $S_m^*(t)$ lies inside the spread, the self-financing condition implies
\[
\Delta h_0(t)+\sum_{m=1}^M\Delta h_m(t)S_m^*(t)\le0.
\]
Therefore
\[
W^*(t)
\le
h_0(t-1)+\sum_{m=1}^M h_m(t-1)S_m^*(t).
\]
Taking conditional expectation under $Q$ yields
\begin{equation}
\label{eq:multi-shadow-supermartingale}
E_Q[W^*(t)\mid\mathcal F_{t-1}]
\le
W^*(t-1).
\end{equation}
Thus $W^*$ is a $Q$-supermartingale. Moreover, pointwise,
\begin{equation}
\label{eq:multi-initial-terminal-bounds}
W^*(0)\le C(0),
\qquad
V(T)\le W^*(T).
\end{equation}
Since $C(0)$ is $\mathcal F_0$-measurable, iterating \eqref{eq:multi-shadow-supermartingale} gives the conditional inequality
\begin{equation}
\label{eq:multi-conditional-bound}
E_Q[V(T)\mid\mathcal F_0]
\le
E_Q[W^*(T)\mid\mathcal F_0]
\le
W^*(0)
\le
C(0).
\end{equation}
If an arbitrage existed, $V(T)$ would be nonnegative everywhere and strictly positive on at least one path. Full support of $Q$ implies that, at the initial node containing that path,
\[
E_Q[V(T)\mid\mathcal F_0]>0,
\]
which contradicts \eqref{eq:multi-conditional-bound} and $C(0)\le0$. Hence $NA(\mathcal P)$ holds.

Conversely, assume $NA(\mathcal P)$. Let $R$ and $Z$ be supplied by Lemma \ref{lem:finite-dual}, and define
\[
S_m^*(t)=\frac{Z_m(t)}{Z_0(t)},
\qquad m=1,\ldots,M.
\]
The bid-ask inequalities imply that $S_m^*(t)$ lies inside the spread. Let
\[
L(t)=E_R\left[\frac{dQ}{dR}\middle|\mathcal F_t\right]
=\frac{Z_0(t)}{Z_0(0)}.
\]
Bayes' formula and the $R$-martingale property of $Z_m$ give
\begin{align*}
E_Q[S_m^*(t+1)\mid\mathcal F_t]
&=
\frac{E_R[L(t+1)S_m^*(t+1)\mid\mathcal F_t]}{L(t)}\\
&=
\frac{E_R[Z_m(t+1)/Z_0(0)\mid\mathcal F_t]}
{Z_0(t)/Z_0(0)}\\
&=
\frac{E_R[Z_m(t+1)\mid\mathcal F_t]}{Z_0(t)}
=
\frac{Z_m(t)}{Z_0(t)}
=S_m^*(t).
\end{align*}
Thus $(Q,S^*)\in\mathcal M_T$.
\end{proof}

\begin{remark}
For any $(Q,S^*)\in\mathcal M_T$, the martingale identity represents the normalized vector $(1,S^*(t,a))$ as a convex combination of successor vectors. Hence
\[
(1,S^*(t,a))\in\widetilde K_t^*(a).
\]
This explains the role of the backward recursion without using nonemptiness of the modified cones alone as a substitute for the full dual argument.
\end{remark}

Assume $\mathcal M_T\ne\varnothing$ and let $X$ be an $\mathcal F_T$-measurable cash claim. Because $\mathcal F_0$ may be nontrivial, the time-$0$ dual expectation bounds are $\mathcal F_0$-measurable random variables. For each initial node $a\in\mathcal N_0$, define
\begin{equation}
\label{eq:multi-pricing-bounds}
\underline\pi_0(X)(a)
:=
\inf_{(Q,S^*)\in\mathcal M_T}E_Q[X\mid A_a],
\qquad
\overline\pi_0(X)(a)
:=
\sup_{(Q,S^*)\in\mathcal M_T}E_Q[X\mid A_a].
\end{equation}
Equivalently,
\[
\underline\pi_0(X)
=
\operatorname*{ess\,inf}_{(Q,S^*)\in\mathcal M_T}
E_Q[X\mid\mathcal F_0],
\qquad
\overline\pi_0(X)
=
\operatorname*{ess\,sup}_{(Q,S^*)\in\mathcal M_T}
E_Q[X\mid\mathcal F_0].
\]
As in the single-period case, these quantities are ranges of linear dual
valuations.  Exact superhedging prices require a separate claim-extension
and duality result.

\subsection{Multi-period asset pricing with short-sale constraints}
\label{subsec:multi-short}

We finally add short-sale constraints to the dynamic market. The restriction is imposed after every trading date, not only at maturity. Hence the investor may reduce an existing long position but may not rebalance into a negative risky holding. As in the single-period setting, this one-sided strategy cone changes the dual martingale equality into a supermartingale inequality. In the multi-period market, however, that inequality must hold conditionally at every node.

We now impose
\begin{equation}
\label{eq:multi-short-constraint}
h_m(t)\ge0,
\qquad m=1,\ldots,M,
\quad t=0,\ldots,T,
\end{equation}
while $h_0(t)$ remains unrestricted. The terminal liquidation value becomes
\begin{equation}
\label{eq:multi-V-short}
V^+(t)
=
h_0(t)+\sum_{m=1}^M h_m(t)\underline S_m(t).
\end{equation}
A constrained arbitrage is defined as in Definition \ref{def:multi-arbitrage}, with \eqref{eq:multi-short-constraint} imposed and $V(T)$ replaced by $V^+(T)$. The corresponding no-arbitrage condition is denoted by $NA^+(\mathcal P)$.

The appropriate pricing system is now obtained by weakening the conditional martingale equality. The inequality below is strong enough because it is multiplied only by nonnegative risky holdings in the shadow-wealth estimate.

\begin{definition}[Multi-period supermartingale consistent price system]
\label{def:multi-super-CPS}
A pair $(\widehat Q,S^*)$ is called a multi-period supermartingale consistent price system if $\widehat Q$ has full support, $S^*$ is adapted,
\[
S_m^*(t,a)
\in
[\underline S_m(t,a),\overline S_m(t,a)],
\qquad m=1,\ldots,M,
\quad t=0,\ldots,T,
\quad a\in\mathcal N_t,
\]
and
\begin{equation}
\label{eq:multi-super-condition}
S_m^*(t)
\ge
E_{\widehat Q}[S_m^*(t+1)\mid\mathcal F_t],
\qquad t=0,\ldots,T-1,
\quad m=1,\ldots,M.
\end{equation}
\end{definition}

Let
\[
\mathcal M_T^+
=
\{(\widehat Q,S^*):(\widehat Q,S^*)\text{ is a multi-period supermartingale consistent price system}\}.
\]

The set $\mathcal M_T^+$ collects all such constrained dual systems. The following result is the dynamic counterpart of Theorem \ref{thm:single-short-FTAP}. Its converse again follows from finite-dimensional separation, but the Lagrange relations for risky holdings are inequalities because negative variations are not admissible.

\begin{lemma}[Finite-state constrained dual process]
\label{lem:finite-constrained-dual}
Assume efficient friction and fix a full-support measure $R$.  If
$NA^+(\mathcal P)$ holds, then there is an adapted process $Z$ such that
$Z_0(t)>0$, $Z_0$ is an $R$-martingale, every $Z_m$, $m\ge1$, is an
$R$-supermartingale, and
\[
\underline S_m(t)Z_0(t)
\le Z_m(t)
\le \overline S_m(t)Z_0(t).
\]
Moreover,
\[
\frac{d\widehat Q}{dR}=\frac{Z_0(T)}{Z_0(0)}
\]
defines a full-support probability measure.
\end{lemma}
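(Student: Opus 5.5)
The plan is to repeat the proof of Lemma \ref{lem:finite-dual} with the short-sale constraint built into the primal linear system, and to track how the sign restriction $h_m(t)\ge0$ changes the dual relations.

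\textbf{Localization and primal system.} Since a time-zero strategy can be chosen separately on each $\mathcal F_0$-atom, $NA^+(\mathcal P)$ holds on every subtree rooted at $a\in\mathcal N_0$. Fix such a subtree. At every node $b$ introduce holdings $h_0(t,b)\in\mathbb R$ and $h_m(t,b)\ge0$, purchase and sale variables $u_m(t,b),v_m(t,b)\ge0$ and slack $d(t,b)\ge0$, linked by the same financing equalities as in Lemma \ref{lem:finite-dual}. At the initial node the equalities are taken relative to zero holdings, so that the cash injected equals $C(0,a)$ plus slack. At terminal nodes append the liquidation at the bid, which is admissible since the holdings are nonnegative. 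By Lemma \ref{lem:single-support-reduction}, $NA^+(\mathcal P)$ says exactly that no feasible point has nonpositive initial cost and terminal cash that is nonnegative and nonzero. This is a finite homogeneous polyhedral system. By a Motzkin/Farkas alternative we obtain strictly positive terminal cash multipliers and node multipliers $\zeta_i(t,b)$, $i=0,\dots,M$, for the holding-update equations.

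\textbf{Reading off the dual relations.} The cash holding $h_0$ is unrestricted, so its coefficient yields the flow equality
\[
\zeta_0(t,b)=\sum_{c\in\operatorname{Succ}_t(b)}\zeta_0(t+1,c).
\]
The risky holding $h_m(t,b)$ is now sign-restricted, so its coefficient yields only the inequality
\[
\zeta_m(t,b)\ge\sum_{c\in\operatorname{Succ}_t(b)}\zeta_m(t+1,c).
\]
I will fix the sign convention by checking that this direction is the one that makes the primal estimate $\sum h_m(\cdots)\le0$ hold for $h_m\ge0$. The convention is also forced by the single-period Theorem \ref{thm:single-short-FTAP}, which should be recovered for $T=1$.

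The variables $u_m,v_m\ge0$ still appear on both sides, so they give $\underline S_m\zeta_0\le\zeta_m\le\overline S_m\zeta_0$ exactly as in \eqref{eq:dual-spread-raw}. At terminal nodes the liquidation coefficient gives $\zeta_m(T,b)\ge\underline S_m(T,b)\zeta_0(T,b)$. The upper bound there comes from the $u_m$ coefficient, or it can be enforced by replacing $\zeta_m(T,b)$ with $\min\{\zeta_m(T,b),\overline S_m\zeta_0\}$. Lowering a terminal value preserves the supermartingale inequality at its predecessor, so this replacement is harmless.

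Strict positivity of the terminal cash multipliers together with the $\zeta_0$ flow equality gives $\zeta_0>0$ at every node. Efficient friction, or more simply the polyhedral structure, guarantees closedness, as in Lemma \ref{lem:finite-dual}.

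\textbf{Normalization.} Set $Z_i(t,b)=\zeta_i(t,b)/R(A_b)$. The same computation as before shows that $Z_0$ is an $R$-martingale and that each $Z_m$ satisfies $E_R[Z_m(t+1)\mid A_b]\le Z_m(t,b)$, i.e.\ is an $R$-supermartingale. The spread inequalities transfer directly. Pasting over the initial subtrees gives one adapted process. Finally $E_R[Z_0(T)/Z_0(0)\mid\mathcal F_0]=1$ with a strictly positive density, so $\widehat Q$ is a full-support probability measure.

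\textbf{Main obstacle.} The delicate step is the careful bookkeeping in the Farkas alternative. Three points need checking: that the sign of the risky-holding multiplier inequality really comes out as a supermartingale and not a submartingale; that the terminal liquidation at the bid is compatible with the upper spread bound; and that the alternative system delivers \emph{strictly} positive terminal cash multipliers. For the last point I would first apply Stiemke's lemma to the terminal-payoff cone intersected with the nonnegative orthant, exactly as in the single-period separation argument of Theorem \ref{thm:single-FTAP}.
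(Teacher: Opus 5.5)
Your proposal is correct and follows essentially the same route as the paper. You localize to each initial subtree, add $h_m(t,b)\ge0$ to the linearized financing system of Lemma~\ref{lem:finite-dual}, and read off the cash flow equality, the risky flow inequality (which does come out in the supermartingale direction, as you expected) and the spread bounds from $u_m,v_m$, before normalizing by $R(A_b)$. Your extra checks make explicit steps the paper leaves implicit: the terminal upper spread bound comes from the time-$T$ purchase variable, and the strictly positive terminal multipliers come from Stiemke's lemma before LP duality is applied.
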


\begin{proof}
Use the same finite linearization and Farkas alternative as in Lemma
\ref{lem:finite-dual}, now adding the primal inequalities
$h_m(t,b)\ge0$.  The unrestricted cash holding still gives the equality
\[
\zeta_0(t,b)
=\sum_{c\in\operatorname{Succ}_t(b)}\zeta_0(t+1,c),
\]
whereas the coefficient condition for a nonnegative risky holding gives
\[
\zeta_m(t,b)
\ge\sum_{c\in\operatorname{Succ}_t(b)}\zeta_m(t+1,c),
\qquad m=1,\ldots,M.
\]
The purchase and sale variables again give
\[
\underline S_m\zeta_0\le\zeta_m\le\overline S_m\zeta_0.
\]
Thus, after setting
\[
Z_i(t,b)=\frac{\zeta_i(t,b)}{R(A_b)},
\]
the cash component is an
$R$-martingale and each risky component is an $R$-supermartingale.  This is
the finite-tree constrained duality behind the supermartingale formulation
of \citet{JouiniShort95}.  The conditional normalization of $Z_0$ is
identical to the last step of Lemma \ref{lem:finite-dual}, and therefore
defines a full-support measure $\widehat Q$.
\end{proof}

\begin{theorem}[Multi-period constrained FTAP]
\label{thm:multi-short-FTAP}
Assume efficient friction. Then
\[
NA^+(\mathcal P)
\quad\Longleftrightarrow\quad
\mathcal M_T^+\ne\varnothing.
\]
\end{theorem}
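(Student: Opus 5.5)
The proof has two directions. Sufficiency will mirror the first half of the proof of Theorem \ref{thm:multi-FTAP}, with the martingale identity replaced by a supermartingale inequality. Necessity will come from Lemma \ref{lem:finite-constrained-dual} via the same Bayes-formula normalization used before.

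For sufficiency, fix $(\widehat Q,S^*)\in\mathcal M_T^+$ and a self-financing strategy with $h_m(t)\ge0$. I define the shadow value $W^*(t)=h_0(t)+\sum_m h_m(t)S_m^*(t)$ as in \eqref{eq:multi-shadow-value}. Since $S_m^*(t)$ lies in the spread, the self-financing inequality \eqref{eq:multi-self-financing} gives
\[
W^*(t)\le h_0(t-1)+\sum_{m=1}^M h_m(t-1)S_m^*(t).
\]
Taking $\widehat Q$-conditional expectations given $\mathcal F_{t-1}$, and using that $h_m(t-1)\ge0$ is $\mathcal F_{t-1}$-measurable, \eqref{eq:multi-super-condition} yields $E_{\widehat Q}[W^*(t)\mid\mathcal F_{t-1}]\le W^*(t-1)$. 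This is the one place where the short-sale constraint is used: nonnegativity of the holdings is what lets the supermartingale inequality pass through. Next I use $W^*(0)\le C(0)$, and $V^+(T)=h_0(T)+\sum_m h_m(T)\underline S_m(T)\le W^*(T)$. Iterating then gives
\[
E_{\widehat Q}[V^+(T)\mid\mathcal F_0]\le C(0).
\]
Now suppose there is an arbitrage. Then $V^+(T)\ge0$ everywhere and, by Lemma \ref{lem:single-support-reduction} applied on the path space, $V^+(T)>0$ on some path. Full support of $\widehat Q$ makes the conditional expectation strictly positive on the corresponding initial atom, while $C(0)\le0$ there. This is a contradiction.

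For necessity, assume $NA^+(\mathcal P)$ and efficient friction. I fix a full-support $R$ (for instance $R_{\mathcal P}$) and take $Z$ and $\widehat Q$ from Lemma \ref{lem:finite-constrained-dual}. I set $S_m^*(t)=Z_m(t)/Z_0(t)$; the bid-ask inequalities for $Z$ place $S_m^*(t)$ in the spread. The density process is $L(t)=Z_0(t)/Z_0(0)$, because $Z_0$ is an $R$-martingale and $Z_0(0)$ is $\mathcal F_0$-measurable. Bayes' formula then gives
\[
E_{\widehat Q}[S_m^*(t+1)\mid\mathcal F_t]=\frac{E_R[Z_m(t+1)\mid\mathcal F_t]}{Z_0(t)}\le\frac{Z_m(t)}{Z_0(t)}=S_m^*(t),
\]
where the inequality is the $R$-supermartingale property of $Z_m$ and uses $Z_0(t)>0$. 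Hence $(\widehat Q,S^*)\in\mathcal M_T^+$.

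The only delicate point, the duality itself, is already packaged in Lemma \ref{lem:finite-constrained-dual}. What remains to check carefully is the role of nontrivial $\mathcal F_0$. First, the argument must be run conditionally on each initial atom: an arbitrage need only have positive payoff on a single atom, while $C(0)\le0$ holds atom by atom. Second, the Bayes computation must use $Z_0(0)$ as an $\mathcal F_0$-measurable normalizer, which cancels because it is known at every $t\ge0$.
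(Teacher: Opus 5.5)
Your proposal is correct and follows the paper's proof essentially step for step. Sufficiency goes through the shadow value $W^*$: its $\widehat Q$-supermartingale property, obtained from $h_m(t-1)\ge0$, gives $E_{\widehat Q}[V^+(T)\mid\mathcal F_0]\le C(0)$ atom by atom. Necessity goes through Lemma~\ref{lem:finite-constrained-dual} with $S_m^*=Z_m/Z_0$ and Bayes' formula, in which the $\mathcal F_0$-measurable factor $Z_0(0)$ cancels. (Your appeal to Lemma~\ref{lem:single-support-reduction} is harmless but not needed: $V^+(T)\ge0$ together with $\sup_{P\in\mathcal P}E_P[V^+(T)]>0$ already forces $V^+(T)>0$ on some path.)
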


\begin{proof}
Suppose $(\widehat Q,S^*)\in\mathcal M_T^+$. Define $W^*$ by \eqref{eq:multi-shadow-value}. The self-financing condition gives
\[
W^*(t)
\le
h_0(t-1)+\sum_{m=1}^M h_m(t-1)S_m^*(t).
\]
Since $h_m(t-1)\ge0$, condition \eqref{eq:multi-super-condition} yields
\[
E_{\widehat Q}[W^*(t)\mid\mathcal F_{t-1}]
\le
W^*(t-1).
\]
Moreover,
\[
W^*(0)\le C(0),
\qquad
V^+(T)\le W^*(T).
\]
Therefore
\begin{equation}
\label{eq:multi-short-conditional-bound}
E_{\widehat Q}[V^+(T)\mid\mathcal F_0]
\le
C(0).
\end{equation}
Full support and $C(0)\le0$ rule out a constrained arbitrage.

Conversely, under $NA^+(\mathcal P)$, take $R$, $Z$, and $\widehat Q$ from
Lemma \ref{lem:finite-constrained-dual} and set
\[
S_m^*(t)=\frac{Z_m(t)}{Z_0(t)}.
\]
The dual-cone inequalities place $S_m^*$ inside
the spread.  Bayes' formula, with the $\mathcal F_0$-measurable factor
$Z_0(0)$ retained, gives
\begin{align*}
E_{\widehat Q}[S_m^*(t+1)\mid\mathcal F_t]
&=
\frac{E_R[Z_m(t+1)/Z_0(0)\mid\mathcal F_t]}
{Z_0(t)/Z_0(0)}\\
&=
\frac{E_R[Z_m(t+1)\mid\mathcal F_t]}{Z_0(t)}
\le
\frac{Z_m(t)}{Z_0(t)}
=S_m^*(t).
\end{align*}
Hence $(\widehat Q,S^*)\in\mathcal M_T^+$.
\end{proof}

For the robust multi-period formulation, we retain a family of pricing measures and compare lower and upper conditional expectations at every node. The distinction is more delicate than in the static market because the relevant bounds depend on both the trading date and the realized history. Let $\mathcal Q$ be a nonempty family of full-support probability measures on $\Omega$. For $a\in\mathcal N_t$, define
\begin{equation}
\label{eq:multi-lower-upper-expectations}
\underline{\mathcal E}_t^{\mathcal Q}[X](a)
:=
\inf_{Q\in\mathcal Q}E_Q[X\mid A_a],
\qquad
\overline{\mathcal E}_t^{\mathcal Q}[X](a)
:=
\sup_{Q\in\mathcal Q}E_Q[X\mid A_a].
\end{equation}

The lower operator selects the smallest conditional continuation value across the family, while the upper operator selects the largest. Accordingly, the lower condition is comparatively weak and appears naturally as a necessary existence condition. The upper condition is stronger because it controls the drift under every measure in $\mathcal Q$, which is exactly what is needed to obtain a uniform supermartingale estimate for nonnegative portfolios.

\begin{definition}[Multi-period $\mathcal Q$-lower system]
\label{def:multi-Q-lower}
A pair $(\mathcal Q,S^*)$ is called a multi-period $\mathcal Q$-lower-supermartingale consistent price system if
$S_m^*(t,a)\in[\underline S_m(t,a),\overline S_m(t,a)]$ for every
$m=1,\ldots,M$, $t=0,\ldots,T$, and $a\in\mathcal N_t$, and
\[
S_m^*(t,a)
\ge
\underline{\mathcal E}_t^{\mathcal Q}[S_m^*(t+1)](a),
\qquad m=1,\ldots,M,
\quad t=0,\ldots,T-1,
\quad a\in\mathcal N_t.
\]
\end{definition}

\begin{definition}[Multi-period $\mathcal Q$-upper system]
\label{def:multi-Q-upper}
A pair $(\mathcal Q,S^*)$ is called a multi-period $\mathcal Q$-upper-supermartingale consistent price system if
$S_m^*(t,a)\in[\underline S_m(t,a),\overline S_m(t,a)]$ for every
$m=1,\ldots,M$, $t=0,\ldots,T$, and $a\in\mathcal N_t$, and
\[
S_m^*(t,a)
\ge
\overline{\mathcal E}_t^{\mathcal Q}[S_m^*(t+1)](a),
\qquad m=1,\ldots,M,
\quad t=0,\ldots,T-1,
\quad a\in\mathcal N_t.
\]
\end{definition}

\begin{corollary}[Existence of a robust lower system]
\label{thm:multi-lower-necessary}
If $NA^+(\mathcal P)$ holds, then there exists a multi-period $\mathcal Q$-lower-supermartingale consistent price system for some nonempty family $\mathcal Q$ of full-support measures.
\end{corollary}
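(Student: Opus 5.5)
The plan mirrors the single-period Corollary \ref{thm:single-lower-necessary}: take a singleton family generated by a constrained dual measure. Note that the efficient-friction assumption is standing in Theorem \ref{thm:multi-short-FTAP}. I would read the corollary under that same assumption, since it is needed to invoke the theorem.

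\emph{Step 1.} Assume $NA^+(\mathcal P)$. By Theorem \ref{thm:multi-short-FTAP} (or directly Lemma \ref{lem:finite-constrained-dual}), $\mathcal M_T^+\neq\varnothing$. So there is a pair $(\widehat Q,S^*)$ in which $\widehat Q$ has full support, $S^*$ is adapted with $S_m^*(t,a)\in[\underline S_m(t,a),\overline S_m(t,a)]$ at every node, and \eqref{eq:multi-super-condition} holds.

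\emph{Step 2.} Set $\mathcal Q=\{\widehat Q\}$. This is a nonempty family of full-support measures, as required in the setting of \eqref{eq:multi-lower-upper-expectations}. For $a\in\mathcal N_t$ we have $\widehat Q(A_a)>0$, so the conditional expectation $E_{\widehat Q}[X\mid A_a]$ is well defined. Moreover, the infimum over a singleton reduces to
\[
\underline{\mathcal E}_t^{\mathcal Q}[S_m^*(t+1)](a)=E_{\widehat Q}[S_m^*(t+1)\mid A_a]=E_{\widehat Q}[S_m^*(t+1)\mid\mathcal F_t](a),
\]
because $A_a$ is an atom of $\mathcal F_t$.

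\emph{Step 3.} Evaluating \eqref{eq:multi-super-condition} on the atom $A_a$ now gives $S_m^*(t,a)\ge\underline{\mathcal E}_t^{\mathcal Q}[S_m^*(t+1)](a)$ for all $m$, all $t<T$ and all $a\in\mathcal N_t$. The spread condition is inherited unchanged, so $(\mathcal Q,S^*)$ satisfies Definition \ref{def:multi-Q-lower}.

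There is no real obstacle here. The only point needing care is the identification of the nodewise conditional expectation with the $\mathcal F_t$-conditional expectation evaluated at $a$, and this uses exactly the full support of $\widehat Q$ supplied by Lemma \ref{lem:finite-constrained-dual}. As in the static case, the result is only an existence statement and not a characterization.
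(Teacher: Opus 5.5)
Your proposal is correct and follows the paper's proof: invoke Theorem \ref{thm:multi-short-FTAP} to obtain $(\widehat Q,S^*)\in\mathcal M_T^+$, then take the singleton family $\mathcal Q=\{\widehat Q\}$. You are also right that the efficient-friction hypothesis of that theorem is tacitly needed, since the corollary as stated omits it.
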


\begin{proof}
By Theorem \ref{thm:multi-short-FTAP}, there exists $(\widehat Q,S^*)\in\mathcal M_T^+$. Taking $\mathcal Q=\{\widehat Q\}$ proves the claim.
\end{proof}

The lower condition is not sufficient in general because
\[
\inf_{Q\in\mathcal Q}
E_Q\left[\sum_{m=1}^M h_m(t)S_m^*(t+1)\middle|\mathcal F_t\right]
\ge
\sum_{m=1}^M h_m(t)
\inf_{Q\in\mathcal Q}E_Q[S_m^*(t+1)\mid\mathcal F_t]
\]
for nonnegative holdings, whereas the supermartingale estimate requires an upper bound on the expectation of the whole portfolio.
An explicit multi-period counterexample is obtained by replicating Example
\ref{ex:single-lower-not-sufficient} at every initial node and replacing the
two Dirac measures by two full-support measures whose one-step conditional
probabilities are $(0.9,0.1)$ and $(0.1,0.9)$, respectively.  The lower
conditional expectation of either risky payoff is then $0.21<0.50$ at every
node, while the same portfolio yields $0.3$ on every terminal path.

\begin{theorem}[Upper-system sufficiency]
\label{thm:multi-upper-sufficient}
If a multi-period $\mathcal Q$-upper-supermartingale consistent price system exists, then $NA^+(\mathcal P)$ holds.
\end{theorem}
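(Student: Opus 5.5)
The plan is to reduce the statement to the argument already used for Theorem \ref{thm:multi-short-FTAP}, with the single measure $\widehat Q$ replaced by an arbitrary but fixed member of $\mathcal Q$. Let $(\mathcal Q,S^*)$ be an upper system and fix any $Q\in\mathcal Q$. Since, at every node $a\in\mathcal N_t$,
\[
S_m^*(t,a)\ge\overline{\mathcal E}_t^{\mathcal Q}[S_m^*(t+1)](a)\ge E_Q[S_m^*(t+1)\mid A_a],
\]
the pair $(Q,S^*)$ satisfies \eqref{eq:multi-super-condition}. By the standing assumption in this subsection, $Q$ has full support, and $S^*$ lies in the spread. Hence $(Q,S^*)\in\mathcal M_T^+$, and the ``if'' direction of Theorem \ref{thm:multi-short-FTAP} already gives $NA^+(\mathcal P)$. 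That direction does not use efficient friction, so no extra hypothesis is needed.

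For a self-contained argument that also exhibits the uniform bound, I would repeat the shadow-wealth estimate. Define $W^*$ by \eqref{eq:multi-shadow-value}. Because $S^*$ lies in the spread, the self-financing inequality gives
\[
W^*(t)\le h_0(t-1)+\sum_m h_m(t-1)S_m^*(t).
\]
Since $h_m(t-1)\ge0$ and $h(t-1)$ is $\mathcal F_{t-1}$-measurable, taking $E_Q[\,\cdot\mid\mathcal F_{t-1}]$ and using the upper condition at each node gives $E_Q[W^*(t)\mid\mathcal F_{t-1}]\le W^*(t-1)$ for every $Q\in\mathcal Q$ simultaneously. Iterating, and using $W^*(0)\le C(0)$ and $V^+(T)\le W^*(T)$ pointwise, we obtain
\[
\sup_{Q\in\mathcal Q}E_Q[V^+(T)\mid\mathcal F_0]\le C(0).
\]

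To conclude, suppose a constrained arbitrage exists. Then $V^+(T)\ge0$ on every path, and by \eqref{eq:multi-P-support} it is strictly positive on some path $\omega$. Let $a$ be the initial node containing $\omega$. Any $Q\in\mathcal Q$ has full support, so $E_Q[V^+(T)\mid A_a]>0$, while $C(0,a)\le0$. This contradicts the displayed bound.

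The only delicate point is that the family need not be stable under pasting. It would therefore be wrong to take a supremum inside the iterated conditional expectations; the nodewise inequality must be applied measure by measure. Fixing a single $Q$ throughout the iteration avoids any time-consistency issue for $\overline{\mathcal E}^{\mathcal Q}$, and this is the step I expect to need the most care.
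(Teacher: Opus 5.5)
Your proof is correct and is essentially the paper's own argument: the upper condition gives $S_m^*(t)\ge E_Q[S_m^*(t+1)\mid\mathcal F_t]$ for each fixed full-support $Q\in\mathcal Q$, so $W^*$ is a $Q$-supermartingale with $E_Q[V^+(T)\mid\mathcal F_0]\le C(0)$, and full support at the initial node of a positive path gives the contradiction (your repackaging as $(Q,S^*)\in\mathcal M_T^+$ is also accurate, as is your observation that the ``if'' direction of Theorem~\ref{thm:multi-short-FTAP} never uses efficient friction). Two side remarks need small corrections: positivity of $V^+(T)$ on some path already follows from $V^+(T)\ge0$ and $\sup_{P\in\mathcal P}E_P[V^+(T)]>0$ without \eqref{eq:multi-P-support}, and taking the supremum inside would not actually be wrong, because $\overline{\mathcal E}_t^{\mathcal Q}$ is monotone and sublinear, so the upper condition and $h_m(t-1)\ge0$ give $\overline{\mathcal E}_{t-1}^{\mathcal Q}[W^*(t)]\le W^*(t-1)$ and the iterated operator dominates every $E_Q[\,\cdot\mid\mathcal F_0]$---only the equality $\overline{\mathcal E}_0^{\mathcal Q}=\overline{\mathcal E}_0^{\mathcal Q}\circ\overline{\mathcal E}_1^{\mathcal Q}$ can fail without stability, and it is never needed.
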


\begin{proof}
The upper condition implies, for every $Q\in\mathcal Q$,
\[
S_m^*(t)
\ge
E_Q[S_m^*(t+1)\mid\mathcal F_t].
\]
Hence the shadow value of every constrained self-financing strategy is a $Q$-supermartingale for every $Q\in\mathcal Q$, and
\begin{equation}
\label{eq:multi-robust-conditional-bound}
E_Q[V^+(T)\mid\mathcal F_0]
\le
C(0),
\qquad \forall Q\in\mathcal Q.
\end{equation}
If a constrained arbitrage existed, then $V^+(T)$ would be positive on at least one path. Since every $Q\in\mathcal Q$ has full support, the conditional expectation in \eqref{eq:multi-robust-conditional-bound} would be strictly positive at the corresponding initial node, contradicting $C(0)\le0$ there.
\end{proof}

\begin{remark}
When $\mathcal Q$ is a singleton, the lower and upper robust conditions both reduce to Definition \ref{def:multi-super-CPS}. For a general family, the lower condition is necessary in the preceding existence sense, while the upper condition is sufficient; they should not be described as a single necessary-and-sufficient characterization without additional stability assumptions on $\mathcal Q$.
\end{remark}

\section{Conclusion}
\label{sec:conclusion}

This paper develops a finite-state framework for discrete asset pricing under transaction costs and a family of probability measures, both with and without short-sale constraints. The use of bid and ask prices makes the liquidation value nonlinear in the portfolio position. Under the support assumption on the family of probability measures, however, the family is equivalent for no-arbitrage purposes to a probability measure with full support and therefore contributes support uncertainty rather than technical nondominance. Within this scope, the finite-dimensional geometry of attainable claims provides a direct route from no arbitrage to a dual pricing system.

In the single-period unconstrained market, the relevant dual object is a full-support consistent price system: the shadow price lies inside the bid-ask spread and is a martingale under the pricing measure. The collection of all such systems produces a range of dual expectations rather than a unique risk-neutral valuation; an exact superhedging interpretation requires an additional duality theorem. When short sales of risky assets are prohibited, negative risky holdings are removed from the strategy cone. The corresponding dual equality becomes a supermartingale inequality, yielding a constrained fundamental theorem of asset pricing. The lower and upper family-based extensions have different roles: the lower formulation is only an existence corollary, and the counterexample shows that it is not sufficient, whereas the upper formulation controls every admissible nonnegative portfolio uniformly and is sufficient for no arbitrage.

The multi-period analysis shows why dynamic consistency must be treated explicitly. Local membership in the bid-ask spread is not enough; a current shadow price must be extendable through successor nodes. Solvency cones and their backward-modified dual cones describe this continuation requirement, while the finite-state dual process constructs the pricing measure and the martingale or supermartingale shadow price. Because the original path space $\Omega=E^{T+1}$ is retained, the initial sigma-field may be nontrivial. Accordingly, the initial cost and time-zero valuation interval are $\mathcal F_0$-measurable random variables, and the central pricing inequalities are conditional on the observed initial state.

Several extensions remain open. One direction is to replace the finite tree by a general measurable state space, where closedness and measurable selection become essential. A second direction is to develop exact superhedging duality for non-cash-settled claims and for portfolios subject to more general convex constraints, such as position limits or asymmetric borrowing restrictions. It would also be useful to identify stability conditions on the family $\mathcal Q$ under which the lower and upper robust supermartingale formulations can be sharpened into a single dynamic characterization. Finally, computational procedures based on backward recursion may provide practical valuation bounds for finite-state markets with many assets and trading dates.

\bibliography{gexpffc_revised}

\end{document}